\documentclass[journal,11pt,draftcls,onecolumn]{IEEEtran} 

\usepackage{amsmath,amssymb,amsthm}
\usepackage{enumerate}
\usepackage{stfloats}
\usepackage{comment}
\usepackage{subfig}

\usepackage[T1]{fontenc}
\usepackage{graphics} 
\usepackage{epsfig} 
\usepackage[mathscr]{euscript}
\usepackage{algorithm}
\usepackage[noend]{algpseudocode}
\makeatletter
\def\BState{\State\hskip-\ALG@thistlm}
\makeatother

\usepackage{tikz}
\usetikzlibrary{arrows,shapes,chains,matrix,positioning,scopes,patterns,calc}
\usepackage{pgfplots}
\pgfplotsset{compat=1.3}
\usepgflibrary{shapes}

\newtheorem{theorem}{Theorem}
\newtheorem{lemma}[theorem]{Lemma}

\newtheorem{definition}[theorem]{Definition}

\renewcommand{\epsilon}{\varepsilon}

\newcommand\indep{\protect\mathpalette{\protect\independenT}{\perp}}
\def\independenT#1#2{\mathrel{\rlap{$#1#2$}\mkern2mu{#1#2}}}

\newcommand{\mb}[1]{\mathbf{#1}}
\newcommand{\mbb}[1]{\mathbb{#1}}

\newcommand{\mc}[1]{\mathcal{#1}}

\newcommand{\msc}[1]{\mathscr{#1}}

\newcommand{\RNum}[1]{\uppercase\expandafter{\romannumeral #1\relax}}

\newcommand{\bv}{\vec{\mathrm{b}}}
\newcommand{\xv}{\vec{\mathrm{x}}}
\newcommand{\yv}{\vec{\mathrm{y}}}
\newcommand{\zv}{\vec{\mathrm{z}}}
\newcommand{\rv}{\vec{\mathrm{r}}}
\newcommand{\wv}{\vec{\mathrm{w}}}

\newcommand{\Xv}{\vec{X}}
\newcommand{\Yv}{\vec{Y}}

\newcommand{\Rv}{\vec{R}}

\DeclareMathAlphabet{\mcl}{OMS}{cmsy}{m}{n}

\DeclareMathOperator*{\argmax}{\,arg\ max}

\newlength\tikzwidth
\newlength\tikzheight

\textfloatsep=0.05in

\newcommand{\coleq}{\mathrel{\mathop:}=}
\newcommand{\defeq}{\triangleq}


\begin{document}
\title{Sub-string/Pattern Matching in Sub-linear Time Using a Sparse Fourier Transform Approach}

\author{Nagaraj T. Janakiraman, Avinash Vem, Krishna R. Narayanan, Jean-Francois Chamberland\\
Department of Electrical and Computer Engineering \\
Texas A\&M University\\
{\tt\small {\{tjnagaraj,vemavinash,krn,chmbrlnd\}@tamu.edu} }}
	\maketitle

\begin{abstract}
	We consider the problem of querying a string (or, a database) of length $N$ bits to determine all the locations where a substring (query) of length $M$ appears either exactly or is within a Hamming distance of $K$ from the query. We assume that sketches of the original signal can be computed off line and stored. Using the sparse Fourier transform computation based approach introduced by Pawar and Ramchandran, we show that all such matches can be determined with high probability in sub-linear time. Specifically, if the query length $M = O(N^\mu)$ and the number of matches $L=O(N^\lambda)$, we show that for $\lambda < 1-\mu$ all the matching positions can be determined with a probability that approaches 1 as $N \rightarrow \infty$ for $K \leq \frac{1}{6}M$. More importantly our scheme has a worst-case computational complexity that is only $O\left(\max\{N^{1-\mu}\log^2 N, N^{\mu+\lambda}\log N \}\right)$, which means we can recover all the matching positions in {\it sub-linear} time for $\lambda<1-\mu$. This is a substantial improvement over the best known computational complexity of $O\left(N^{1-0.359 \mu} \right)$ for recovering one matching position by Andoni {\em et al.} \cite{andoni2013shift}. Further, the number of Fourier transform coefficients that need to be computed, stored and accessed, i.e., the sketching complexity of this algorithm is only $O\left(N^{1-\mu}\log N\right)$. Several extensions of the main theme are also discussed.
\end{abstract}



\section{Introduction and Problem Statement}
\label{sec:introduction}
We consider the substring/pattern matching problem, which has been studied extensively in theoretical computer science.
In this problem, a signal $\xv \coleq (x[0], x[1], \ldots, x[N-1])$ of length $N$ symbols representing a string, library, or database is available.
The objective is to answer queries regarding whether a given string $\yv \coleq (y[0], y[1],\ldots, y[M-1])$ of length $M$ is a substring of $\xv$.
We are especially interested in the case where a sketch of $\xv$ can be computed offline and stored, and where the one-time computational complexity of creating the sketch can be amortized over repeated queries or ignored.
Moreover, we focus on the random setting in which the $x[i]$'s form a sequence of independent and identically distributed (i.i.d.) random variables, each taking values in $\mathcal{A} \subset \mathbb{R}$. We begin our analysis by restricting our attention to the binary case where $\mathcal{A} \coleq \{\pm 1\}$.\footnote{Extensions to other alphabets $\mathcal{A} \subset \mathbb{R}$ are straightforward. Extension to the non-i.i.d.\ case is also possible.}
Within this context, we consider the following two settings:

{\bf Exact Pattern Matching:} In the exact pattern matching problem, a substring of length $M$ of $\xv$, namely $\yv \coleq \xv[\tau:\tau+M-1]$, is obtained by taking $M$ consecutive symbols from $\xv$ and is presented as a query. This pattern may appear within $\xv$ in up to $L$ different locations $\tau_1, \tau_2, \ldots, \tau_L$ and the task is to determine all the locations $\tau_i, \forall i\in[1:L]$.
We consider the probabilistic version where our objective is to recover the matching locations with a probability that approaches 1 as $M,N \rightarrow \infty$.

{\bf Approximate Pattern Matching:} In approximate pattern matching, $\yv$ is a noisy version of a substring, i.e., $\yv = \xv[\tau:\tau+L-1] \odot \bv$, where $\bv$ is a noise sequence with $b[i] \in \{ \pm 1 \}$ such that $d_H(\yv,\xv[\tau:\tau+M-1]) \leq K$.
Function $d_H$ denotes the Hamming distance, and $\odot$ represents component-wise multiplication. The objective is to determine all locations $\tau_i$ such that $d_H(\yv,\xv[\tau_i:\tau_i+M-1]) \leq K$ with a probability that approaches 1 as $K, M$ and $N \rightarrow \infty$.

We evaluate our proposed algorithm according to two metrics - (i) the space required to store the sketch of $\xv$, which we refer to as sketching complexity, and (ii) the computational complexity in answering the query.

These problems are relevant to many applications including text matching, audio/image matching, DNA matching in genomics, metabolomics, radio astronomy, searching for signatures of events within large databases, etc. The proposed techniques are particularly relevant now due to the interest in applications involving huge volumes of data. Our proposed approach is most useful in the following situations.
(i) The string $\xv$ is available before querying and one time computations such as computing a sketch of $\xv$ can be performed offline and stored, and the complexity of computing the sketch of $\xv$ can be ignored. Then, when queries in the form of $\yv$ appear, one would like to decrease the computational complexity in searching for the string $\yv$. (ii) The string $\xv$ is not centrally available, but parts of the string are sensed by different data collecting nodes distributively and communicated to a central server. A search query is presented to the server; and this server must decide whether the string appears in the data sensed by one or more of the distributed nodes and, if present, it must also identify when the queried string appeared.
In this latter case, we wish to minimize the amount of data communicated by the nodes to the server and the computational complexity in searching for the string.
The proposed formulation is most useful when the query $\yv$ is not a pattern that can be predicted and, therefore, creating a lookup table to quickly identify commonly-occurring patterns will not be effective.

A naive approach to searching for substring $\yv$ in $\xv$ is to first compute the cross-correlation between $\xv$ and $\yv$, which we denoted by $\rv=[r[0], r[1],\ldots, r[N-1]]$ with
\begin{align}
\label{Eqn:DefCrossCorrelation}
r[m]=(\xv*\yv)[m] \defeq \sum_{i=1}^{M} x[m+i-1] y[i] ,
\end{align}
and, subsequently,  choosing the index $m$ that maximizes $r[m]$.
This strategy uses the $N$ samples of $\xv$ and has a super-linear computational complexity of $MN = O(N^{1+\mu})$. A computationally more efficient approach uses the fact that $\rv$ can be computed by taking the inverse Fourier transform of the product of the Fourier transforms of $\xv$ and $ \yv^*[-n]$, where $\yv^*[-n]$ is the conjugate of the time reversed version of $\yv$.
Such an approach still uses all the $N$ samples of $\xv$, but reduces the computational complexity to $O(N \log N)$. Note that even though the Fourier transform of $\xv$ can be precomputed, the $N$-point Fourier transform of $\yv$ still needs to be computed online resulting in the $O(N \log N)$ computational complexity.

Both the exact pattern matching problem and the approximate pattern matching problem have been extensively studied in computer science.
Three recent articles \cite{andoni2013shift}, \cite{amir2004faster} and \cite{navarro2001guided} provide a brief summary of existing contributions.
For the exact matching problem, the Rabin-Karp algorithm solves a more general problem of finding the occurrence of a set of query strings.
However, the algorithm has a computational complexity that is at least linear in $N$. Boyer and Moore presented an algorithm in \cite{boyer1977fast} for finding the first occurrence of the match (only $\tau_1$) that has an expected complexity of $O(N/M \log N) = O(N^{1-\mu} \log N)$, whereas the worst case complexity (depending on $\tau_1$'s) can be $O(N \log N)$. For large $M$, the algorithm indeed has an average complexity that is sub-linear in $N$. More recently, it has been shown that techniques based on the Burrows-Wheeler transform can be used to solve the exact matching problem with sub-linear time complexity \cite{ferragina2005indexing} using a storage space of $O(N)$ bits. This problem is well studied under the read alignment setting by the Bioinformatics community \cite{li2009fast,li2010fast}. For small $|\mathcal{A}|$, it has the best known complexity; however, the complexity increases with $|\mathcal{A}|$. Further, extensions to approximate matching setting \cite{zhang2003approximate} has a complexity that increases exponentially in $K$ and, hence, appear to be infeasible for $K = O(M)$. The Boyer and Moore algorithm has been generalized to the approximate pattern matching problem in \cite{chang1994approximate} with an average case complexity of $O(NK/M \log N)$, which provides a sub-linear time algorithm only when $K \ll M$. In \cite{andoni2013shift}, Andoni, Hassanieh, Indyk and Katabi have given the first sub-linear time algorithm with a complexity of $O(N/M^{0.359})$ for $K = O(M)$.

\section{Our Main Results and Relation to Prior Work}
\label{sec:mainresults}

	Assume that a sketch of $\xv$ of size $O(\frac{N}{M} \log N)$ can be computed and stored. Then for the {\it exact pattern matching} and {\it approximate pattern matching} (with $K = \eta M$) problems, with the number of matches $L$ scaling as $O(N^{\lambda})$, we show an algorithm that has
	\begin{itemize}
		\item a sketching function for $\yv$ that computes $O(\frac{N}{M}\log N)~=~ O\left(N^{1-\mu}\log N\right)$ samples
		\item a computational complexity of $O\left(\max\{N^{1-\mu}\log^2 N, N^{\mu+\lambda}\log N \}\right)$
		\item a decoder that recovers all the $L$ matching positions with a failure probability that approaches zero asymptotically in $N$		
	\end{itemize} 
	When $L<O\left(\frac{N}{M}\right)$ (i.e. $\lambda<1-\mu$), which is typically the interesting case, our algorithm has a {\it sub-linear time and space complexity}.


There are important differences between our paper and \cite{hassanieh2012faster,andoni2013shift,boyer1977fast,amir2004faster}. First and foremost, the algorithms used for pattern matching are entirely different. While their algorithms are combinatorial in nature, our algorithm is algebraic and uses signal processing and coding theoretic primitives. Secondly, the system model considered in our paper differs from the model in \cite{hassanieh2012faster,andoni2013shift,boyer1977fast,amir2004faster} in that we allow for preprocessing or creating a sketch of the data $\xv$. Our algorithm exploits this fact and results in a computational complexity $O(N/M)$ which is better than that in \cite{andoni2013shift} for the approximate pattern matching problem.  Finally, we also consider the problem of finding all matches of the pattern $\yv$ instead of looking for only one match.

Our paper is inspired by and builds on two recent works by Hassanieh {\em {et al.}} in \cite{hassanieh2012faster} and Pawar and Ramchandran \cite{pawar2014robust}. In \cite{hassanieh2012faster}, Hassanieh \emph{et al.}, considered the correlation function computation problem for a Global Positioning System (GPS) receiver and exploited the fact that the cross-correlation vector $\rv$ is a very sparse signal in the time domain and, hence, the Fourier transform of $\rv$ need not be evaluated at all the $N$ points. In the GPS application, which was the focus of \cite{hassanieh2012faster}, the query $\yv$ corresponds to the received signal from the satellites and, hence, the length of the query was at least $N$. As a result, the computational complexity is still $O(N \log N)$ (still linear in $N$) and only the constants were improved in relation to the approach of computing the entire Fourier transform. In a later paper by Andoni {\em et al.,} \cite{andoni2013shift}, a sub-linear time algorithm for shift finding in GPS is presented; however, this algorithm is completely combinatorial and eschews algebraic techniques such as FFT-based techniques.

In \cite{pawar2014robust}, Pawar and Ramchandran present an algorithm based on aliasing and the peeling decoder for computing the Fourier transform of a signal with noisy observations for the case when the Fourier transform is known to be sparse. This algorithm has a complexity of $O(N \log N)$ and they do not consider the pattern matching problem. Our algorithm is similar to that of Pawar and Ramchandran's algorithm with one important distinction. We modify their algorithm to exploit the fact that the peak of the correlation function of interest is always positive. This modification allows us to compute the Sparse Inverse Discrete Fourier Transform (SIDFT) with sub-linear time complexity of $O(N^{1-\alpha} \log N)$, $0 < \alpha \leq 1$ . One of the main contributions of this paper is to show that signal processing and coding theoretic primitives, i.e., Pawar and Ramchandran's algorithm with some modifications can be used to solve the pattern matching algorithm in sub-linear time.

\section{Notations}

The table below introduces the notations we adopt throughout this paper.
\begin{table}[h!]
\begin{center}
	\begin{tabular}{cc} 	
		\hline		
		\textit{Symbol}	    &  \textit{Meaning} \\		
		\hline
		$N$           				& Size of the string or database in symbols \\
	
		$M$   				        & Length of the query in symbols \\

        $L$    						&   Number of matches \\
		$\mu$ 				        & Smallest $0<\mu<1$ such that $M =O(N^{\mu})$\\		
		$\lambda$       		& Smallest $0<\lambda<1$ such that $L =ON^\lambda)$\\
        $K$                        &$\max_{\tau}d_{H}(\xv[\tau:\tau+M-1],\yv)$\\
	    $\eta$             &$\frac{K}{M}$\\
$d$           				& Number of stages in the FFAST algorithm \\
$f_i \approx N^\alpha$     & Length of smaller point IDFT at each stage-$i$\\
$g_i = N/f_i$     	    &  Sub-sampling factor in Fourier domain for stage-$i$\\
$B$   					    & Number of shifts (or branches) in each stage \\
$G = N^\gamma$    & Number of blocks (for parallel processing)\\
$\tilde{N} = N^{1-\gamma}$   & Length of one block (for parallel processing)\\
		\hline
	\end{tabular}
\end{center}	
\caption{Parameters and various quantities involved in describing the algorithm}
\label{Table:Notations}
\end{table}	
We denote signals and vectors using the standard vector notation of arrow over the letter, time domain signals using lowercase letters and the frequency domain signals using uppercase letters. For example $\xv = (x[0], x[1], \ldots, x[N-1] )$ denotes a time domain signal with $i^{\text{th}}$ time component denoted by $x[i]$, and $\Xv= \mathcal{F}_{N}\{\xv\}$ denotes the $N$-point Fourier coefficients of $\xv$. We denote matrices using boldface upper case letters. We denote the set $\{0,1,2\cdots, N-1\}$ by $[N]$.
\section{Description Of The Algorithm}
\label{sec:Algo_desc}
In this section, given the input string $\xv$ and the query string $\yv$, we describe our algorithm that finds the matching positions $\mathcal{T}\coleq \{\tau_1, \tau_2, \cdots \tau_L\}$ with sample and time complexities that are sub-linear in $N$. The main idea exploits the fact that the correlation vector $\rv$ is sparse (upto some noise terms) with dominant peaks at $L$ matching positions denoted by $\mc{T}$ and noise components at $N-L$ positions where the strings do not match.
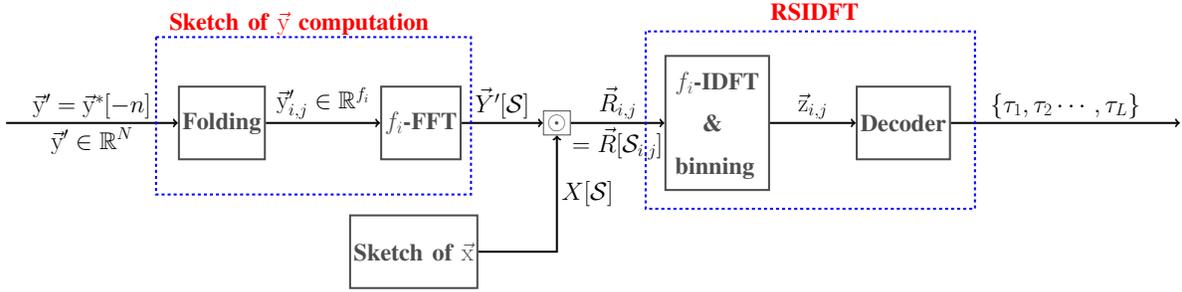
\begin{figure*}[t!]
  \centering
	 	\resizebox{0.95\textwidth}{!}{\begin{tikzpicture}

\def\nodewidth{1in}
\def\fsize{\Huge}
\tikzstyle{block} = [rectangle, draw, thick,opacity=0.7,line width =2, minimum size=\nodewidth]
\tikzstyle{opnode} = [rectangle, draw, thick,opacity=0.7,line width=1, minimum size=0.2in]

\node[block] (r1) at (-8.7,3){\fsize \bf Folding};
\node[block] (r2) at (-1.75,3) {\fsize \bf $f_i$-FFT};
\node[opnode] (r3) at (3,3) {\fsize \bf $\odot$};
\node[block,align=center] (r4) at (8.6,3) {\fsize \bf \begin{tabular}{l}
$f_i$-IDFT \\ 
~~ \& \\ 
binning
\end{tabular}};
\node[block] (r5) at (15.1,3) {\fsize \bf Decoder};
\node[block] (r6) at (-2,-1.5) {\fsize \bf Sketch of $\xv$};

\draw[<-, thick, line width=2] (r1.west)--node[midway, above]{\fsize $\yv'=\yv^*[-n]$}
node[midway, below]{\fsize $\yv'\in\mathbb{R}^N$}+(-6,0);
\draw[->, thick, line width=2] (r1.east)--node[midway, above]{\fsize $\yv'_{i,j}\in\mathbb{R}^{f_i}$}(r2.west);
\draw[->, thick, line width=2] (r2.east)--node[midway, above]{\fsize $\Yv'[\mc{S}]$}(r3.west);
\draw[->, thick, line width=2] (r3.east)--node[midway, above]{\fsize $\Rv_{i,j}$}
node[midway,below]{\fsize \bf $=\Rv[\mathcal{S}_{i,j}]$}(r4.west);
\draw[->, thick, line width=2] (r4.east)--node[midway, above]{\fsize $\zv_{i,j}$}(r5.west);
\draw[->, thick, line width=2] (r5.east)--node[midway, above]{\fsize $\{\tau_1, \tau_2 \cdots, \tau_L \}$}+(8,0);
\draw[->, thick, line width=2] let \p1=(r6),\p2= (r3) in (r6.east)--(\x2,\y1)--node[midway, right]{\fsize \bf $X[\mathcal{S}]$}  (r3.south);

\draw [dashed, line width =2, color=blue] (-11,6) rectangle (0,0.5);
\node[color= red] at (-6,6.5) {\fsize \bf  Sketch of $\yv$ computation};
\draw [dashed, line width =2, color=blue]  (6.1,6.2) rectangle (17.6,0);
\node [color = red] at (12,6.9) {\fsize \bf RSIDFT};
\end{tikzpicture}}	
	\caption{Schematic of the proposed scheme using sparse Fourier transform computation.}\label{fig:notional}
\end{figure*}

Consider the correlation signal $\rv$ in the case of exact matching:
\begin{equation} \label{eqn:RXY_sparse}
r[m] \ = \left\{
\begin{array}{ll}
  &M,~~  \text{if} \ m \in \mathcal{T} \\
  & n_m,~~ m \in [N]-\mathcal{T}
\end{array}
\right.
\end{equation}
where $n_m$ is the noise component that is induced due to correlation of two i.i.d. sequence of random variables each taking values from $\mathcal{A} := \{+1,-1\}$. The sparse vector $\rv$ can be computed indirectly using Fourier transform approach as shown below:
\begin{equation}\label{eqn:Rxy_fourier}
  \rv = \underset{\text{ \RNum{3} } } {\mathcal{F}_{N}^{-1}} \ \{ \underset{\text{ \RNum{1} } }{  \mathcal{F}_{N}\{\xv\}}  \odot \ \underset{\text{ \RNum{2} } }{ \mathcal{F}_{N}\{\yv'\}}  \}
\end{equation}
where $\mathcal{F}_{N}\{ \cdot \}$ and $\mathcal{F}_{N}^{-1}\{ \cdot \}$ refer to $N$-point discrete Fourier transform and its inverse respectively, $\odot$ is the point-wise multiplication operation and ${ y'[n]} = { y^{*}[-n]}$. Fig.~\ref{fig:notional} presents a notional schematic of our Algorithm. As evident from Eq.~\eqref{eqn:Rxy_fourier}, our algorithm for computing $\rv$ consists of three stages:
\begin{itemize}
\item Computing the sketch $\Xv=\mathcal{F}_{N}\{\xv\}$ of $\xv$
\item Computing the sketch $\Yv'=\mathcal{F}_{N}\{\yv'\}$ of $\yv$
\item Computing the IDFT of $\Rv=\Xv \odot \Yv'$ given $\Xv$ and $\Yv'$
\end{itemize}

\subsection{Sparse Inverse Discrete Fourier Transform}
\label{subsec:RSIDFT}	
 In this section we present Robust Sparse Inverse Discrete Fourier Transform(RSIDFT) scheme that exploits sparsity in the cross-correlation signal $\rv$ and efficiently recovers its $L$ dominant coefficients. The architecture of RSIDFT is similar to that of the FFAST scheme proposed in \cite{pawar2014robust}, but the decoding algorithm has some modifications to handle the noise model induced in this problem. We will see in Sec.~\ref{subsec:skteches} how the sketches $\Xv$ and $\Yv'$ are computed efficiently but for this section we will focus only on the recovery of the sparse coefficients in $\rv$ given $\Xv$ and $\Yv'$.
	  	
Consider the RSIDFT framework shown in Fig.~\ref{fig:rsidft}. Let $ \Rv =\Xv \odot \Yv'$ be the DFT of the cross-correlation signal of $\xv$ and $\yv$. We begin by factoring $N$ into $d$ relatively prime factors $\{f_1,f_2,\ldots,f_d\}$, where $d$ is a parameter in the algorithm. The design scheme for choosing $f_i$'s for various values of $\mu$ such that $f_i$ divides $N$ and $f_i=N^{\alpha}+O(1) \forall ~i\in[d]$ are given in Sec. \ref{subsec:DesignParameters}. The RSIDFT algorithm consists of {\it $d$-stages} with each stage corresponding to a sub-sampling factor of $\frac{N}{f_i}$. In each stage, there are {\it $B= O(\log N)$ branches} with shifts from the set $ \{s_1, s_2, \cdots s_B\} $, where $s_1 =0$ in the first branch and the rest are chosen uniformly at random from $[N]$.
	   	 	
\begin{figure}[h!]
	\begin{center}
	 	\includegraphics[scale=0.9]{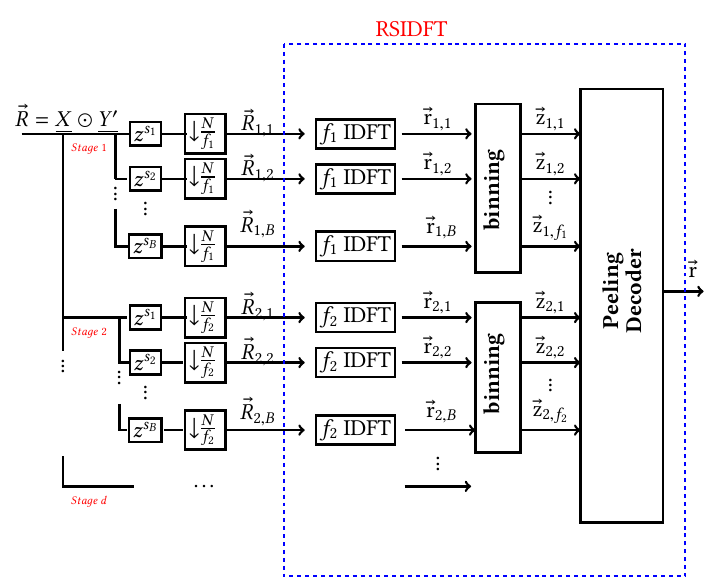}
	\end{center}	
\caption{ RSIDFT Framework to compute inverse Fourier Transform of a signal $\Rv$ that is sparse in time domain. }\label{fig:rsidft}
\end{figure}

	 Given the input $\Rv$, in branch $j$ of $i^{\text{th}}$ stage of RSDIFT, referred to as \textit{branch $(i,j)$} for simplicity, RSIDFT sub-samples the signal $\Rv$ at
\begin{align}
\label{Eqn:SamplingSets}
	 \mathcal{S}_{i,j} \coleq \{s_j,\ s_j + g_i,\ s_j + 2g_i,\ \cdots s_j + (f_i-1) g_i\}, \quad i\in[d], j\in[B]
\end{align}
where $g_i\coleq \frac{N}{f_i}$ to obtain $\Rv_{i,j}\coleq\Rv[\mc{S}_{i,j}]$. The sub-sampling operation is followed by a $f_i$-point IDFT in \textit{branch (i,j)} of stage-$i$ to obtain $ \rv_{i,j}$. Notice that $ \rv_{i,j}$ is an aliased version of $\rv$ due to the property that sub-sampling in Fourier domain is equivalent to aliasing in time domain.

	 Let $\zv_{i,k}\in\mbb{R}^{B}$, for $k\in [f_i]+1$, be the $k$th \textit{binned} observation vector of stage-$i$ formed by stacking $\rv_{i,j}[k], j\in[B]$, together as a vector i.e.
\[
	  \zv_{i,k} = \begin{bmatrix}
	 r_{i,1}[k],~ 
	 r_{i,2}[k],~ 
	 \cdots,~
	 r_{i,B}[k]
	 \end{bmatrix}^{T}
\]
Note that this gives us a total of $f_i$ binned observation vectors in each stage-$i$. Using the properties of Fourier transform, we can write the relationship between the observation vectors $\zv_{i,k}$ at bin $(i,k)$ and sparse vector to be estimated $\rv$ as: 
\begin{align}
	\label{Eqn:Generator Equation}
	\zv_{i,k}= \mb{W}_{i,k} \times
	\begin{bmatrix}
		r[k],~
		r[k+f_i],~
		\cdots,~
		r[k+(g_i-1)f_i]
	\end{bmatrix}^{T}
\end{align}

where we refer to $\mb{W}_{i,k}$ as the sensing matrix at bin $(i,k)$ and is defined as
\begin{align}\label{Eqn:Sensing Matrix}
	\mb{W}_{i,k} = \left[\wv^{k},\wv^{k+f_i},\ldots, \wv^{k+(g_i-1)f_i}\right]
\end{align} and
 $\wv^{k}=
	\begin{bmatrix}
		e^{\frac{j2\pi ks_1}{N}}, ~
		e^{\frac{j2\pi ks_2}{N}},~
		\cdots,~
		e^{\frac{j2\pi ks_B}{N}}
	\end{bmatrix}^{T}$

\begin{figure}[h!]
	\begin{center}
	 	\resizebox{0.70\textwidth}{!}{\begin{tikzpicture}
\def \fsize{\Huge}


\def\nodewidth{0.4in}
\def\cx{-1}
\tikzstyle{bit} = [circle, draw, thick,fill=gray,opacity=0.7,line width =2, minimum size=\nodewidth]
\tikzstyle{check} = [rectangle, draw, thick,fill=white, line width =2, minimum size=\nodewidth]

\node[bit](v8) at  (-7,-10) {};
\node[bit](v1) at (-7,-7) {};
\node[bit] (v6) at (-7,-4){};
\node[bit] (v5) at (-7,-1) {};
\node[bit] (v3) at (-7,2) {};
\node[bit] (v1_2) at (-7,5) {} ;

\node [anchor=east] at (v1_2.west){\Huge $r[0]$};
\node [anchor=east] at (v3.west){\Huge $r[1]$};
\node [anchor=east] at (v5.west){\Huge $r[2]$};
\node [anchor=east] at (v6.west){\Huge $r[3]$};
\node [anchor=east] at (v1.west){\Huge $r[4]$};
\node [anchor=east] at (v8.west){\Huge $r[5]$};

\node [check](c1) at  (\cx,-8.5) {};
\node [check](c2) at  (\cx,-5.5) {};
\node [check](c3) at  (\cx,-2.5){};
\node [check](c4) at   (\cx,0.5) {};
\node [check](c5) at (\cx,3.5){};

\draw [dashed,line width=1.5pt] (c3)++(-0.5in,-0.5in)-- ++(3,0);

\node [anchor=west] at (c5.east) {\Huge $ \zv_{1,1} = \begin{bmatrix}
	 r_{1,1}[1] &=& r[0] &+& r[3]  \\
	 r_{1,2}[1] &=& r[0] w_2^{0} &+& r[3] w_2^{3} \\
	 \end{bmatrix}$};

\node [anchor=west] at (c4.east) {\Huge $  \zv_{1,2} = \begin{bmatrix}
	 r_{1,1}[2] &=& r[1] &+& r[4] \\
	 r_{1,2}[2] &=& r[1] w_2^{1} &+& r[4] w_2^{4}\\
	 \end{bmatrix}$};

\node [anchor=west] at (c3.east) {\Huge $  \zv_{1,3} = \begin{bmatrix}
	 r_{1,1}[3] &=& r[2] &+& r[5]  \\
	 r_{1,2}[3] &=& r[2] w_2^{2} &+& r[5] w_2^{5} \\
	 \end{bmatrix}$};

\node [anchor=west] at (c2.east) {\Huge $  \zv_{2,1} = \begin{bmatrix}
	 r_{2,1}[1]&=& r[0]  &+& r[2]  &+& r[4]  \\
	 r_{2,2}[1] &=& r[0] w_2^{0} &+& r[2] w_2^{2} &+& r[4] w_2^{4}\\
	 \end{bmatrix}$};

\node [anchor=west] at (c1.east){\Huge $  \zv_{2,2} = \begin{bmatrix}
	 r_{2,1}[2] &=&  r[1] &+& r[3] &+& r[5] \\ \vspace{3pt}
	 r_{2,2}[2] &=&  r[1] w_2^{1} &+& r[3] w_2^{3} &+& r[5] w_2^{5} \\
	 \end{bmatrix}$};

\draw[thick,line width =2]  (v1_2) edge (c4.west);
\draw[thick,line width =2]  (v1_2) edge (c2.west);

\draw[thick,line width =2]  (v3) edge (c1.west);
\draw[thick,line width =2]  (v5.east) edge (c3.west);
\draw[thick,line width =2]  (v3) edge (c5.west);
\draw[thick,line width =2]  (v5.east) edge (c2.west);

\draw[thick,line width =2]  (v6.east) edge (c4.west);
\draw[thick,line width =2]  (v1) edge (c5.west);
\draw[thick,line width =2]  (v8.east) edge (c3.west);
\draw[thick,line width =2]  (v6.east) edge (c1.west);
\draw [thick,line width =2] (v8.east) edge (c1.west);
\draw [thick,line width =2] (v1) edge (c2.west);
\node at (4,7) { \Huge \Huge $w_i^k := \left (  e^{j \frac{2 \pi s_i}{N}}\right )^{k}$};

\end{tikzpicture}}	
	\end{center}	
	\caption{Example of a Tanner graph formed in a RSIDFT framework with system parameters being $N=6$, $f_1=2, f_2=3$ (i.e., $d=2$) and $B=2$. The variable nodes (gray circles) represent the cross-correlation vector $\rv$ and the bin nodes (white squares) represent the binned observation vector $\zv_{i,k}$. The figure illustrates the relationship between $\zv_{i,k}$ and $\rv$.}\label{fig:factorgraph}
	\vspace{5 pt}
\end{figure}
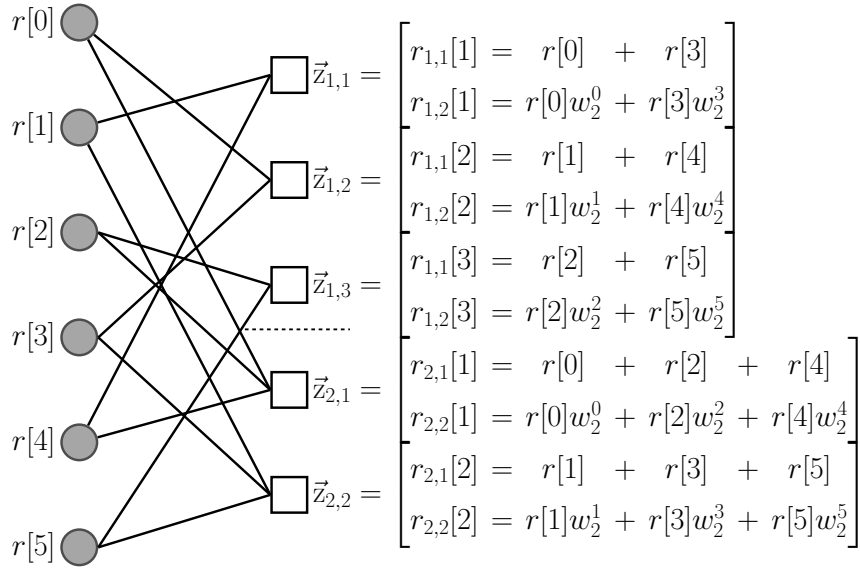
 
We represent the relation between the set of observation vectors $\{\zv_{i,k},i\in[1:d],k\in[f_i]\}$ and $\rv$ using a Tanner graph, an example of which is shown in Fig.~\ref{fig:factorgraph}. The nodes on the left, which we refer to as {\it variable nodes}, represent the $N$ elements of vector $\rv$. Similarly the nodes on the right, which we refer to as {\it bin nodes}, represent the $\sum_{i\leq d} f_i$ sub-sensing signals. We will now describe the decoding algorithm which takes the set of observation vectors $\{\zv_{i,k},i\in[1:d],k\in[f_i]\}$, each of length $B$, at $df_i$ bins as input and estimates the $L$-sparse $\rv$.	

\subsubsection{Decoder}			
	Observe from the Tanner graph that the degree of each variable node is $d$ and that of each bin node at stage $i$ is $g_i$. A variable node is referred to as non-zero if it corresponds to a matching position and as zero if it corresponds to a non-matching position. Note that even though the cross-correlation vector value corresponding to a non-matching position is not exactly zero but some negligible noise value we refer to them as zero variable nodes for simplicity. We refer to a bin node as {\it zero-ton} (or $\msc{H}_z$) if the number of non-zero variable nodes connected to the bin node is zero. The {\it singleton ($\msc{H}_s$), double-ton ($\msc{H}_d$) and multi-ton ($\msc{H}_m$)} bin nodes are defined similarly where the number of non-zero variable nodes connected are one, two and greater than two, respectively. The peeling decoder has the following three steps in the decoding process.

\paragraph*{Bin Classification} In this step a bin node is classified either as a zero-ton or a singleton or a multi-ton. At bin $(i,j)$ the classification is done based on  comparing the first observation $z_{i,j}[1]$, which corresponds to zero shift, with a predefined threshold. For $z_{i,j}[1]=z$, the classification hypothesis at bin $(i,j)$, $\widehat{\msc{H}}_{i,j}$, can be written as follows:
\begin{align}
\label{Eqn:BinClassifApprox}
\widehat{\msc{H}}_{i,j}=
\begin{cases}
\msc{H}_z &  	 z/M < \gamma_1\\
\msc{H}_s &	  \gamma_1 < z/M < \gamma_2  \\
\msc{H}_d  &    \gamma_2  < z/M <  \gamma_3\\
\msc{H}_m &      z/M > \gamma_3\\
\end{cases}
\end{align}
where $(\gamma_1,\gamma_2,\gamma_3)=(\frac{1-2\eta}{2},\frac{3-4\eta}{2},\frac{5-6\eta}{2})$. Note that for the case of exact matching $\eta=K/M=0$.
\paragraph*{Singleton decoding}
If a bin node $(i,j)$ is classified as a singleton in the bin classification step, we need to identify the position of the non-zero variable node connected to it. This is done by correlating the observation vector $\zv_{i,j}$ with each column of the sensing matrix  $\mb{W}_{i,j} \coleq [\wv^{j},\wv^{j+f_i},\ldots,   \wv^{j+(g_i-1)f_i}]$ and choosing the index that maximizes the correlation value.
\begin{align*}
 \hat{k} = \underset{k\in\{j+l f_i\}}{\argmax}~~ \zv^{\dagger}_{i,j} \wv^{k}
\end{align*}
where $\dagger$ denotes the conjugate transpose. The value of the variable node connected to the singleton bin is decoded as:
 $$
 \hat{r}[\hat{k}]=M(1-\eta).
 $$\vspace{-3pt}
 Note that for the case of exact matching we know the value to be exactly equal to $M$. But in the case of approximate matching, the actual value of $r[k]\in\{M(1-2\eta),\ldots,M-1,M\}$ and our estimate $\hat{r}[\hat{k}]=M(1-\eta)$ is only approximate. But this suffices for recovering the positions of matches i.e., the indices of the sparse coefficients in $\rv$.			

\paragraph*{Peeling Process} The peeling based decoder we employ consists of finding a singleton bin, then identifying the single non-zero variable node connected to the bin, decoding its value and removing ({\it peeling off}) it's contribution from all the bin nodes connected to that variable node. The main idea behind this decoding scheme is that (for appropriately chosen parameters), at each iteration, peeling a singleton node off will induce at least one more singleton bin and the process of peeling off can be repeated iteratively. Although the main idea is similar for exact matching and the approximate matching scenarios, there are some subtle differences in their implementation.\\
{\it Exact Matching}: In the case of exact matching, we remove the decoded variable node's contribution from all the bin nodes it is connected to.\\
{\it Approximate Matching}: In this case, similar to the approach in \cite{lee2015saffron},  we remove the decoded variable node's contribution only from bins that are originally a singleton or a double-ton. We do not alter the bins which are classified to be multi-tons with degree more than two.

{\it Note:} The differences in peeling implementation for exact and approximate matching cases is because unlike exact matching the approximate matching may result in non-zero error ($e_1$) between the actual correlation value $r[k]$ and the estimate $\hat{r}[\hat{k}]$, i.e. $e_1 = |r[k]- \hat{r}[\hat{k}]|$ with $0 \leq e_1 \leq \eta M$. This may lead to error propagation if we use the same decoding scheme as in exact matching. Hence to overcome this problem we impose a constraint on the type of bin nodes participating in the peeling process.
	
We present the overall recovery algorithm, which comprises of {\it bin classification, singleton decoding} and {\it peeling process}, in the Algorithm.\ref{Algo:decoder} pseudo-code. Note that $\msc{N}(k)$ denote the neighborhood for variable node $k$  i.e., the set of bins connected to $k^{\text{th}}$ variable node.

\def\gap{4pt}
\begin{algorithm}[h!]
\caption{Peeling based recovery algorithm}
\label{Algo:decoder}
\begin{algorithmic}
\State Compute $\widehat{\msc{H}}_{i,j}$ for $i\in[d], j\in[f_i]$. (See Eqn. \eqref{Eqn:BinClassifApprox})
\vspace{\gap}
\While {$\exists~ i,j:\widehat{\msc{H}}_{i,j}= \msc{H}_s$,}
\vspace{\gap}
  \State $(\hat{k},\hat{r}[\hat{k}])=${\bf Singleton-Decoder}$(\zv_{i,j})$
\vspace{\gap}
  \State Assign $\hat{r}[\hat{k}]$ to $\hat{k}^{\text{th}}$ variable node
\vspace{1.5\gap}
  \For{$(i_0,j_0)\in\msc{N}(\hat{k})$}
\vspace{\gap}
       \If {\it Exact Matching}
	   \State $\zv_{i_0,j_0}\gets\zv_{i_0,j_0}-\hat{r}[\hat{k}]\wv^{\hat{k}}\hspace{5.5ex} $ \hspace{20.5ex} 
	   \vspace{\gap}
        \Else
         
        \State $\zv_{i_0,j_0}\gets\zv_{i_0,j_0}-\hat{r}[\hat{k}]\wv^{\hat{k}}$   \hspace{2ex} only if $~~~~\widehat{\msc{H}}_{i_0,j_0}=\msc{H}_s$ or $\msc{H}_d$ 	   
        \hspace{14ex} 
        \EndIf
	   \vspace{\gap}
	   \State Re-do the bin classification for $(i_0,j_0)$ and compute $\widehat{\msc{H}}_{i_0,j_0}$
      \EndFor
\EndWhile
\end{algorithmic}
\end{algorithm}

\begin{algorithm}[h!]
\caption{Singleton-Decoder}
\label{Algo:SingletonDecoder}
\begin{algorithmic}
\State{\bf Input:} $\zv_{i,j}$
\vspace{\gap}
\State{\bf Output:} $(\hat{k},\hat{r}[\hat{k}])$
\vspace{\gap}
\State Estimate singleton index to be $ \hat{k} = \underset{k\in\{j+l f_i\}}{\argmax}\  \zv^{\dagger}_{i,j} \wv^{k}$
\vspace{\gap}
  \State Estimate the value to be:$$ \hat{r}[\hat{k}]=
   \begin{cases}
   M & \text{ Exact Matching case}\\
  M-K & \text{ Approximate Matching case}
  \end{cases}
  $$
\end{algorithmic}
\end{algorithm}

\subsubsection{Choosing $f_i$ and $\alpha$ for various $\mu$}
\label{subsec:DesignParameters}

For a given value of $\mu$, we will describe how to choose the parameters $d$ and $f_i$. Find a factorization for signal length $N=\prod_{i=0}^{d-1} P_i$ such that the set of integers $\{P_0,P_1,\ldots,P_{d-1}\}$ are pairwise co-prime and all the $P_i$ are approximately equal. More precisely, let $P_i=\mathbf{F}+O(1) ~\forall i$ for some value $\mathbf{F}$. We can add zeros at the end of the vector $\xv$ and increase the length of the vector until we are able to find a factorization that satisfies this property.
\begin{itemize}
	\item  For $\mu<0.5$: Choose $f_i = N/P_i$. \\
	{\it Exact Matching:} Find $d\in \mbb{N}\backslash \{1,2\}$ such that $\mu\in(\frac{1}{d},\frac{1}{d-1}]$\\
	{\it Approximate Matching}: If $\mu\in(\frac{1}{8},\frac{1}{2})$, choose $d=8$. Else find $d\geq 8$ such that $\mu\in(\frac{1}{d},\frac{1}{d-1})$\\
    \item For $\mu>0.5$: Choose $f_i = P_i$\\
    {\it Exact Matching:} Find $d\in \mbb{N}\backslash \{1,2\}$ such that $\mu\in(1-\frac{1}{d-1},1-\frac{1}{d}]$\\
    {\it Approximate Matching}: If $\mu\in(\frac{1}{2},\frac{1}{8})$, choose $d=8$. Else find $d\geq 8$ such that $\mu\in(1-\frac{1}{d-1},1-\frac{1}{d})$\\    	 
\end{itemize}
Thus, for both the exact and approximate matching cases, for any $0<\mu<1$, we choose the down-sampling factors $f_i$ to be approximately equal to $N^{\alpha}$ where $\alpha > 1- \mu$.

\subsubsection{Distributed processing framework}
Given a database(or string) of length $N$, we divide the database into $G=N^\gamma$ blocks each of length $\tilde{N} = N/G$. Now each block can be processed independently (in parallel) using the RSIDFT framework with the new database length reduced from $N$ to $\tilde{N}$. This distributed framework has the following advantages
\begin{itemize}
	\item Firstly, this enables parallel computing and hence can be distributed across different workstations.
	\item Improves the sample and computational complexity by a constant factor.
	\item Sketch of the database needs to be computed only for a smaller block length and hence requires computation of only a shorter $\tilde{N}$ point FFT.
	\item Helps overcome implementation issues with memory and precision as the scale of the problem is reduced. 
\end{itemize}
\subsection{Sketches of $\Xv$ and $\Yv$}
\label{subsec:skteches}		
 As we have already seen in Sec.~\ref{subsec:RSIDFT} the RSIDFT framework requires the values of $\Rv(=\Xv\odot \Yv)$ only at indices $\mc{S}$ or in other words we need $\Xv$ and $\Yv$ only at the indices in set $\mc{S}$ of cardinality $dBf_i$. We assume that the sketch of $\xv$, $ \Xv[\mc{S}]= \{X[i],i\in \mc{S}\}$ is pre-computed and stored in a database.

\subsubsection*{Computing the sketch of $\yv$}: For every new query $\yv$, only $\{\Yv'[\mc{S}_{i,j}]$,  $i\in[d],j\in[B]\}$ needs to be computed where the subsets $\mc{S}_{i,j}$, defined in Eq. \eqref{Eqn:SamplingSets}, are
\begin{align}
	 \mathcal{S}_{i,j} \coleq \{\ s_j + k g_i: k\in[f_i]\}, \quad i\in[d], j\in[B]
\end{align}
of cardinality $f_i$. Naively, the FFT algorithm can be used to compute $N$-pt DFT of $\Yv'$ and the required subset of coefficients can be taken but this is inefficient and would be of $O(N \log N)$ complexity. Instead, we observe that $\Yv'[\mc{S}_{i,j}]$ is $\Yv'$ shifted by $s_j$ and sub-sampled by a factor of $g_i$. Thus for a given $(i,j)$ this corresponds to, in time domain, a point-wise multiplication by $[1,w_{s_j},w_{s_j}^2,\ldots,w_{s_j}^{N-1}]$ followed by \textit{folding} the signal into $g_i(=\frac{N}{f_i})$ signals each of length $f_i$ and adding them up resulting in a single length-$f_i$ signal denoted by $\yv'_{i,j}$. Formally the \textit{folding} operation can be described as follows:
\begin{equation*}
	\yv'_{i,j} = \sum \limits_{m = 0}^{g_i-1} \yv'[mf_i:(m+1)f_i-1] \odot [w_{s_j}^{mf_i},w_{s_j}^{mf_i+1},\ldots, w_{s_j}^{(m+1)f_i-1}],
\end{equation*}	
where, $w_{s_j}=e^{-\frac{j2\pi s_j}{N}}$. Taking $f_i$-point DFT of $\yv'_{i,j}$ produces $\Yv'[\mc{S}_{i,j}]$ i.e. 
 \begin{align*}
	\Yv'[\mc{S}_{i,j}]=\mc{F}_{f_i} \{\yv'_{i,j}\}
\end{align*}
which is what we need in branch $(i,j)$. To obtain all the samples in $\mathcal{S}$ required for the RSDIFT framework, the \textit{folding} technique followed by a DFT needs to be carried out for each $(i,j)$, for $i\in[d],j\in[B]$, a total of $dB$ times $N^{\alpha}$-point DFT.

\section{Performance Analysis}
\label{sec:analysis}
\def\vgap{2pt}
In this section, we will analyze the overall probability of error involved in finding the correct matching positions. This can be done by analyzing the following three error events independently and then using a union bound to bound the total probability of error.

\begin{itemize}
	\item $\mathcal{E}_1${-\it Bin Classification}: Event that a bin is wrongly classified.
	\item $\mathcal{E}_2${-\it Position Identification}: Given a bin is correctly identified as a singleton, event that the position of singleton is identified incorrectly.
	\item $\mathcal{E}_3${-\it Peeling Process}: Given the classification of all the bins and the position identification of singletons in each iteration is accurate, event that the peeling process fails to recover the $L$ significant correlation coefficients.
\end{itemize}

\subsection{\bf Bin Classification}
\begin{lemma}
The probability of bin classification error at any bin $(i,j)$ can be upper bounded by
\begin{align*}
\mbb{P}[\mc{E}_1]\leq 6e^{-\frac{N^{\mu+\alpha-1}(1-6\eta)^2}{16}}
\end{align*} \label{Lem:binclassification}
\end{lemma}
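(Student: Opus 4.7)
The plan is to reduce the bin-classification error at bin $(i,j)$ to a single one-dimensional concentration statement and then union-bound over the few thresholds that can be crossed.

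\emph{Algebraic decomposition.} Using the aliasing identity (subsample $\Rv$ by $g_i$ and take an $f_i$-point IDFT in the zero-shift branch $s_1=0$), I would first write
\[
z_{i,j}[1] \;=\; r_{i,1}[j] \;=\; \sum_{m=0}^{g_i-1} r[j + m f_i].
\]
Let $h$ be the number of $m$'s for which $j+mf_i$ is a matching position (i.e., the Tanner-graph degree of the bin). Splitting the sum into matched and non-matched parts gives $z_{i,j}[1] = S_h + N$ with $S_h/M \in [h(1-2\eta),\,h]$ (equal to $hM$ in the exact case), while $N$ collects the $g_i - h$ non-matching correlations, each a zero-mean length-$M$ Rademacher sum.

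\emph{Threshold margin.} With $\gamma_k = (2k-1-2k\eta)/2$ for $k=1,2,3$, each $\gamma_k$ is the midpoint of the gap $[k-1,\,k(1-2\eta)]$ between hypotheses $h=k-1$ and $h=k$, giving a margin of $M(1-2k\eta)/2$ on each side of $\gamma_k$. The minimum over $k\in\{1,2,3\}$ equals $M(1-6\eta)/2$ (attained at $k=3$), which is precisely where the standing assumption $\eta<1/6$ appears. Consequently the event $\{|N| < M(1-6\eta)/2\}$ forces correct classification at bin $(i,j)$, and its complement is contained in the union of at most six threshold-crossing events (three thresholds times two sides).

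\emph{Concentration of the noise.} I would control $N$ by a subgaussian/Hoeffding-type bound for Rademacher (bilinear) sums. Writing $N = x^\top A y$ with $A_{\ell,i}$ counting the non-matching shifts $k'$ for which $k'+i=\ell$, the arithmetic-progression structure of the shift set $\{j+mf_i\}$ implies $\|A\|_F^2 \leq g_i M$ (there are at most $g_i M$ ones in $A$). Conditioning on $\yv$ and applying Hoeffding for weighted Rademacher sums (or a Hanson--Wright-type bound for the bilinear form, which also handles the random-$\yv$ regime) then yields
\[
\mbb{P}\!\left(|N| \geq \tfrac{M(1-6\eta)}{2}\right) \;\leq\; 2\exp\!\left(-\frac{M(1-6\eta)^2}{C\, g_i}\right)
\]
for a suitable absolute constant $C$. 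Substituting $g_i = N^{1-\alpha}$ and $M=N^\mu$ converts the exponent to $N^{\mu+\alpha-1}(1-6\eta)^2/C$, and a union bound over the six threshold-crossings with $C=16$ delivers the claimed $6\,e^{-N^{\mu+\alpha-1}(1-6\eta)^2/16}$.

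The main technical care is in the concentration step: when $f_i<M$ the shift windows overlap, so the non-matching correlations $r[j+mf_i]$ are not independent, and one cannot simply treat $N$ as a sum of independent Rademacher variables. Handling $N$ as a sparse bilinear Rademacher form and exploiting the specific arithmetic structure of $\{j+mf_i\}$ to keep the relevant operator/Frobenius norms under $O(g_iM)$ is, in my view, the only delicate part of the argument; everything else is threshold-midpoint bookkeeping and a union bound.
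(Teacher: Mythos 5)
Your proof is correct and follows the same essential route as the paper: decompose $z_{i,j}[1]$ into a signal part determined by the bin degree plus a noise part from the non-matching correlations, observe that each threshold $\gamma_k$ sits at the midpoint of the gap between adjacent hypotheses, and control the noise with a Hoeffding-type tail bound. The packaging differs in two ways worth noting. First, the paper splits the analysis into four conditional lemmas (Lemmas \ref{Lem:ZerotonClassif}, \ref{Lem:SingletonClassif}, \ref{Lem:DoubletonClassif}, \ref{Lem:MultitonClassif}), one per true bin type, each with its own margin $(1-2k\eta)/2$, and then sums the $1+2+2+1=6$ resulting terms; your uniform worst-case margin $M(1-6\eta)/2$ collapses all cases into a single two-sided tail event and is, if anything, tighter (a factor of $2$ rather than $6$ would already suffice, and padding to $6$ only weakens your bound to match the stated one). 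Second, the dependence issue you flag --- that for $f_i<M$ the windows $\xv[\theta_j:\theta_j+M-1]$ overlap, so the $g_i$ non-matching correlations are not independent --- is exactly what the paper's Lemma \ref{Lem:tailbounds} is built to handle: it regroups the $g_iM$ Rademacher products into $p_{j,\ell}=x[\theta_j+\ell]\sum_k y[\ell+kf_i]$ and treats these as independent sub-Gaussians with parameter $\sqrt{M/f_i}$. Your bilinear-form/Hanson--Wright framing is a legitimate alternative and is arguably the more careful one, since the paper asserts rather than proves the mutual independence of the $p_{j,\ell}$ (for fixed $\ell$ and varying $j$ they share the same $y$-sum factor, so one really needs to condition on $\yv$ first, as you do). I see no gap in your argument.
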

\begin{proof}
\begin{align*}
\mbb{P}[\mc{E}_1] =& \leq \mbb{P}[\mc{E}_1|\widehat{\msc{H}}_{i,j}=\msc{H}_z]~+
						\quad \mbb{P}[\mc{E}_1|\widehat{\msc{H}}_{i,j}=\msc{H}_s]~ \quad + \quad \mbb{P}[\mc{E}_1|\widehat{\msc{H}}_{i,j}=\msc{H}_d \cup \msc{H}_m]\\
    			&\leq  e^{-\frac{N^{\mu-\alpha}(1-2\eta)^2}{8}}+2e^{-\frac{N^{\mu-\alpha}(1-4\eta)^2}{16}} \quad + 2e^{-\frac{N^{\mu+\alpha-1}(1-6\eta)^2}{16}}+e^{-\frac{N^{\mu+\alpha-1}(1-6\eta)^2}{16}}\\
    			&\leq 6e^{-\frac{N^{\mu+\alpha-1}(1-6\eta)^2}{16}}\\
 \end{align*}
{where the inequalities in the third line are due to Lemmas \ref{Lem:ZerotonClassif}, \ref{Lem:SingletonClassif}, \ref{Lem:DoubletonClassif} and \ref{Lem:MultitonClassif} respectively provided in Appendix \ref{Append:BinClassif}.}
\end{proof}

\subsection{\bf Position Identification}
\begin{lemma}
Given that a bin $(i,j)$ is correctly classified as a singleton, the probability of error in identifying the position of the non-zero variable node can be upper bounded by
\begin{align*}
\mbb{P}[\mc{E}_2]\leq & \exp\left\lbrace-\frac{N^{\mu+\alpha-1}(1-2\eta)^2(c_1^2-1)}{8(c_1^2+1)}\right\rbrace
 \quad + \quad \exp\left\lbrace-\frac{N^{\mu+\alpha-1} ~ (c_1(1 - 2\eta) - 1)^2}{8(1+ c_1^2)}\right\rbrace
\end{align*}\label{Lem:posidentification}
\end{lemma}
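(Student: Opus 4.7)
The plan is to condition on bin $(i,j)$ being correctly classified as a singleton and reduce the position identification error to two simple deviation events governed by a free threshold $c_1>1$. Let $k^\star\in\mc{I}_{i,j}\coleq\{j+\ell f_i:\ell\in[g_i]\}$ denote the unique non-zero variable node, so that $r[k^\star]\in[M(1-2\eta),M]$, while the remaining values $r[k']$ are cross-correlations of two nearly i.i.d.\ $\pm 1$ sequences, bounded deterministically by $M$ but typically of order $\sqrt{M}$. Writing $\zv_{i,j}=\sum_{k'\in\mc{I}_{i,j}}r[k']\wv^{k'}$, the singleton decoder scores each candidate $k$ by
$$
\Lambda(k)\;\coleq\;\Re\{\zv_{i,j}^{\dagger}\wv^k\}\;=\;r[k^\star]\sum_{b=1}^{B}\cos\!\Big(\tfrac{2\pi(k^\star-k)s_b}{N}\Big)\;+\;\sum_{k'\ne k^\star}r[k']\sum_{b=1}^{B}\cos\!\Big(\tfrac{2\pi(k'-k)s_b}{N}\Big),
$$
and returns the argmax. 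At $k=k^\star$ the first sum collapses to $B$ and the dominant contribution is $B\,r[k^\star]$; for any $k\ne k^\star$ both sums average to zero under the random shifts $s_b$.

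Next, I would introduce the threshold $T=B\,r[k^\star]/c_1$ and partition the error event into $\mc{E}_2^{A}=\{\Lambda(k^\star)<T\}$ and $\mc{E}_2^{B}=\{\exists\, k\ne k^\star:\Lambda(k)\ge T\}$, bounding each via Hoeffding's inequality applied in the randomness of $s_1,\ldots,s_B$ (conditional on $\xv$ and $\yv$, so that each $r[k']$ is a fixed constant of magnitude at most $M$). For $\mc{E}_2^{A}$, the required downward deviation from the mean is $B\,r[k^\star](1-1/c_1)$; normalising this against the variance proxy $\sum_{k'\ne k^\star}r[k']^2/B$, invoking the identity $M/g_i=N^{\mu+\alpha-1}$, and substituting the worst case $r[k^\star]=M(1-2\eta)$ produces the first exponent, with the factor $(c_1^2-1)/(c_1^2+1)$ arising from the elementary identity $(1-1/c_1^2)/(1+1/c_1^2)$. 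For $\mc{E}_2^{B}$, at a fixed $k\ne k^\star$ the mean of $\Lambda(k)$ over the random shifts is $O(1)$, so the upward excursion needed to exceed $T$ is essentially $B\,r[k^\star](c_1(1-2\eta)-1)/c_1$ after scaling; the same Hoeffding estimate yields the second exponent, and a union bound over the $g_i-1=N^{1-\alpha}-1$ competing candidates is absorbed into the exponent thanks to the choice $B=\Theta(\log N)$, which contributes a $\log N$ factor that dominates any polynomial pre-factor whenever $\mu+\alpha>1$.

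The principal obstacle will be controlling this union-bound pre-factor $N^{1-\alpha}$ in $\mc{E}_2^{B}$ without degrading the clean constants in the stated bound; the logarithmic number of branches is precisely what absorbs the $\log(g_i-1)$ loss into the exponent. A secondary subtlety is that the off-target correlations $\{r[k']\}$ are themselves random and weakly dependent through the shared input $\xv$, but since Hoeffding's inequality only needs the deterministic bound $|r[k']|\le M$, conditioning on $\xv,\yv$ removes any joint-tail analysis. Adding the two resulting tail probabilities $\mbb{P}[\mc{E}_2^{A}]$ and $\mbb{P}[\mc{E}_2^{B}]$ then yields the claimed inequality.
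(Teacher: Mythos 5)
Your proposal diverges from the paper's argument in a way that creates a real gap. The paper's proof (Appendix C) keeps the shifts essentially fixed and uses them only through the mutual incoherence bound $\mu_{\max}(\mb{W}_{i,k}) \le 2\sqrt{\log(5N)/B} = 1/c_1$ for $B = 4c_1^2\log 5N$; the concentration inequality is then applied to the noise correlations $n_k$ themselves (each a sum of $M$ independent $\pm 1$ products, with roughly $g_i$ of them entering the bin), where the incoherence attenuates every cross term by a factor $\mu_{\max}$, so the effective variance is of order $g_i M \mu_{\max}^2 = g_i M / c_1^2$ and the exponent $N^{\mu+\alpha-1}(c_1^2-1)/(c_1^2+1)$ falls out. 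You do the opposite: you condition on $\xv,\yv$ and apply Hoeffding in the randomness of $s_1,\dots,s_B$. The difficulty is that for each fixed $b$ the $g_i-1$ interfering terms $r[k']\cos(2\pi(k'-k)s_b/N)$ all depend on the \emph{same} shift $s_b$, so the per-shift summand $\xi_b=\sum_{k'\neq k^\star} r[k']\cos(\cdot)$ has range up to $\sum_{k'}|r[k']|$, which is of order $g_i\sqrt{M}$ typically and $g_iM$ under the deterministic bound $|r[k']|\le M$ that you explicitly invoke. Hoeffding over the $B$ shifts therefore gives an exponent of order $B M^2/(g_i M)^2$ or at best $BM/g_i^2$, i.e.\ $\Theta(B N^{\mu+2\alpha-2})$ — weaker than the claimed $N^{\mu+\alpha-1}$ by a factor of $g_i=N^{1-\alpha}$ and vanishing only when $\alpha>1-\mu/2$. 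The quantity $\sum_{k'}r[k']^2/B$ you call a ``variance proxy'' is the true variance of the fluctuation, not a Hoeffding range parameter, and the summands are far from sub-Gaussian at that scale; a Bernstein bound lands in the Poisson regime and still fails to reach the stated exponent. The incoherence step is not an optional refinement here — it is the mechanism that suppresses the interference to the point where the claimed constants are attainable.

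Separately, you misattribute the second exponential term. It is not the tail of a ``some competitor crosses the threshold $T=Br[k^\star]/c_1$'' event; in the paper it is the bound on $\mc{E}_{22}$, the event that position identification fails at a bin that was originally a \emph{double-ton} from which one non-zero node has already been peeled off (approximate-matching case). The residual peeling error $e_1\in[-\eta M,\eta M]$ reduces the signal gap from $M(1-2\eta)(1-\mu_{\max})$ to $M(1-2\eta-\mu_{\max})$, which is precisely where the factor $(c_1(1-2\eta)-1)^2$ in the second exponent comes from after substituting $\mu_{\max}=1/c_1$. Your decomposition into $\mc{E}_2^{A}$ and $\mc{E}_2^{B}$ cannot produce that structure, so even with the concentration issue repaired your argument would prove a different (and incomplete) statement.
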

\begin{proof}
	The detailed proof is provided in Appendix \ref{Append:PositionIdentif}.
\end{proof}

\subsection{\bf Peeling Process}
\begin{table}[ht]
	\centering
	\begin{tabular}{ c  c  c  c  c  c  c  c  }
		\hline
		$d$ & $2$& $3$ & $4$ & $5$ & $6$ & $7$ & $8$ \\ \hline
		$\delta$ & 1.000 & 0.4073 & 0.3237 & 0.2850 & 0.2616 & 0.2456 & 0.2336 \\ 
		$d\delta$ & 2.000 & 1.2219 & 1.2948 & 1.4250 & 1.5696 & 1.7192 & 1.8688 \\ \hline
	\end{tabular}
	\vspace{1ex}
	\caption{Constants for various error floor values}
	\label{Table:EtaValues}
\end{table}
To analyze the peeling process alone independently, we refer to a {\it oracle based peeling decoder} which has the accurate classification of all the bins and can accurately  identify the position of the singleton given a singleton bin at any iteration. In other words, oracle based peeling decoder is the peeling part of our decoding scheme but with the assumption that the bin classification and position identification are carried out without any error.
\begin{lemma}
[Exact Matching]
For the exact matching case, choose $F^{d-1}=\delta N^\alpha$ where $\delta$ is chosen as given in Table. \ref{Table:EtaValues}. Then the oracle based peeling decoder:
\begin{itemize}
\item successfully uncovers all the $L$ matching positions if $L=\Omega(N^{\alpha})$ and $L\leq N^{\alpha}$, with probability at least $1-O(1/N^{\frac{1}{d}})$
\item successfully uncovers all the $L$ matching positions, if $L=o(N^{\alpha})$, with probability at least $1-e^{-\beta \varepsilon_1^2N^{\alpha/(4l+1)}}$ for some constants $\beta,\varepsilon_1>0$ and $l>0.$
\end{itemize}\label{Lem:peeling_exact}
\end{lemma}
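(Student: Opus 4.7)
The plan is to reduce the oracle peeling analysis to a standard density-evolution plus stopping-set argument on a random bipartite Tanner graph, closely following the template of Pawar--Ramchandran~\cite{pawar2014robust}. Since the oracle hypothesis strips away the bin-classification and position-identification errors, what remains is the purely combinatorial question: can iteratively removing degree-one bin nodes (together with their unique variable neighbors) from the subgraph induced by the $L$ non-zero variable nodes reduce it to empty?

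First, I would observe that, by the Chinese Remainder Theorem and the pairwise co-primality of $f_1,\ldots,f_d$, a matching position $\tau\in[N]$ is mapped to the $d$-tuple of bins $(\tau \bmod f_1,\ldots,\tau \bmod f_d)$, which for ``generic'' $\mc{T}$ behaves like $d$ independent uniform draws over $[f_1]\times\cdots\times[f_d]$. The induced bipartite graph therefore belongs to the same $d$-left-regular ensemble analyzed in \cite{pawar2014robust}, with edge density $L/f_i$ bounded by the constant $\delta$ in each stage through the choice $F^{d-1}=\delta N^\alpha$.

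For the dense regime $L=\Theta(N^\alpha)$, I would write down the density-evolution recursion $p_{t+1}=\phi_d(p_t;\delta)$ for the probability $p_t$ that a random edge is still unpeeled after iteration $t$; the $\delta$-values tabulated in Table~\ref{Table:EtaValues} are precisely the thresholds below which $p_t\to 0$. A Doob-martingale concentration argument in the spirit of Richardson--Urbanke then shows that after a constant number of rounds the unpeeled fraction is arbitrarily small with probability $1-o(1)$. The residual failure event is the existence of a stopping set --- a nonempty subset of variable nodes whose entire bin-neighborhood has right-degree at least $2$ --- and a first-moment enumeration of stopping sets, dominated by the size-$2$ term, yields the advertised $O(N^{-1/d})$ bound. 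The sparse regime $L=o(N^\alpha)$ is handled similarly but with $L/f_i\to 0$: density evolution terminates in $O(\log L)$ rounds, and failure is driven by the smallest stopping-set size $l$, whose expected count scales roughly as $L^l(L/f)^{l(d-1)}$. Optimizing over $l$ and invoking a Chernoff concentration on the number of doubly-covered bins then produces the stretched-exponential bound $\exp(-\beta\varepsilon_1^2 N^{\alpha/(4l+1)})$.

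The main obstacle I anticipate is the concentration step: the CRT-based ensemble is deterministic given $\mc{T}$, so one must justify that the stopping-set expectation computed under the idealized uniform random ensemble genuinely upper-bounds the true count --- either by averaging over the random shifts $s_j$ introduced in the RSIDFT design, by appealing to the i.i.d.\ signal model for $\xv$ which makes $\mc{T}$ effectively random, or through a worst-case combinatorial argument over $\mc{T}$. A secondary technical point is pinning down the exponent $4l+1$ in the sparse regime, which requires a careful balance between the polynomial prefactor $L^l$, the stopping-set probability $(L/f)^{l(d-1)}$, and the Chernoff tail on the number of shared bins that converts the expectation bound into concentration.
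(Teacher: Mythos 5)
Your proposal is mathematically on the right track, but it takes a genuinely different route from the paper, because the paper does not reprove this lemma at all: its proof is a transfer argument. It lists three differences from the robust-FFAST setting of \cite{pawar2014robust} (IDFT versus DFT, positive-amplitude coefficients, unknown sparsity $L$), argues that none of them affects the Tanner-graph ensemble or the oracle peeling dynamics, and then imports both bullets wholesale; for the $L=o(N^{\alpha})$ case it additionally argues that the known ``recover a $(1-\varepsilon)$ fraction with exponentially small failure probability'' guarantee upgrades to full recovery because $L$ is sub-dominant. What you have written is essentially a reconstruction of the internals of that cited result --- density evolution with threshold $\delta$, Doob-martingale concentration, and a first-moment stopping-set count giving the $O(N^{-1/d})$ tail --- which buys self-containedness and, usefully, surfaces an issue the paper silently inherits: the CRT-induced graph is deterministic given $\mc{T}$, and the random-ensemble analysis is only justified because the matching positions are effectively uniform under the i.i.d.\ signal model, a point you correctly flag and the paper never addresses. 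The one caveat is the second bullet: you propose to re-derive the $e^{-\beta\varepsilon_1^2 N^{\alpha/(4l+1)}}$ bound by optimizing over stopping-set sizes, whereas that exponent actually comes out of the specific ``very sparse regime'' construction in \cite{pawar2014robust}; if you pursue your route you must check that your balance of $L^{l}(L/f)^{l(d-1)}$ against the Chernoff tail really produces $\alpha/(4l+1)$ rather than asserting it, since that step is where your sketch is thinnest.
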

\begin{proof}
We borrow this result from Pawar and Ramchandran's \cite{pawar2014robust}. Although our RSDIFT framework and their robust-FFAST scheme have three main differences:
\begin{itemize}
\item We are computing smaller IDFT's to recover a sparse bigger IDFT whereas in \cite{pawar2014robust} the same is true for DFT instead of IDFT.
\item Our problem model is such that the sparse components of the signal space has only positive amplitude and thus our bin processing part (bin classification and position identification) is different when compared to \cite{pawar2014robust}.
\item The sparsity of the signal $L$ to be recovered is exactly known in the case of \cite{pawar2014robust} whereas we have no information  about $L$ not even the order with which the quantity scales in $N$.
\end{itemize}

Irrespective of these differences, the Tanner graph representation of the framework and the peeling part of the decoder are identical to that of the robust-FFAST scheme. And thus the limit of the {\it oracle based peeling decoder} for our scheme is identical to that in the robust-FFAST scheme \cite{pawar2014robust}.  With respect to the third difference, in robust-FFAST scheme the authors choose $F^{d-1}=\delta k$ where $k$ is the sparsity of the signal (which is assumed to be known) and show the first assertion of the lemma. They also showed that upto a constant fraction $(1-\varepsilon)$ of $k$-variables node can be recovered with probability of failure that decays exponentially in $N$. In our case, since $L=o(N^{\alpha})$, this result translates to recovering all the $L$ non-zero variable nodes with an exponentially decaying failure  probability.
\end{proof}

In any iteration, given a singleton bin, the peeling process, in the case of approximate matching, runs the Singleton-Decoder algorithm on the bin only if it was either originally a singleton or originally a double-ton with one of the variable nodes being peeled off already. This is in contrast to the exact matching case where the peeling decoder runs the Singleton-Decoder on the bin irrespective of it's original degree. Hence we need to analyze the oracle based peeling decoder for the approximate matching case separately compared to the exact matching case.
\begin{lemma}
[Approximate Matching]
For the approximate matching case, choose parameter $d\geq 8$ as described in Sec. \ref{subsec:DesignParameters} and $F^{d-1}=0.7663 N^{\alpha}$. Then the oracle based peeling decoder: 
\begin{itemize}
\item successfully uncovers all but a small fraction $\varepsilon=10^{-3}$ of the $L$ matching positions, if $L=\Omega(N^{\alpha})$ and $L\leq N^{\alpha}$ with a failure probability that decays exponentially in $N$
\item successfully uncovers all the $L$ matching positions, if $L=o(N^{\alpha})$, with probability at least $1-e^{-\beta \varepsilon_1^2N^{\alpha/(4l+1)}}$ for some constants $\beta,\varepsilon_1>0$ and $l>0.$
\end{itemize}\label{Lem:peeling_approximate}
\end{lemma}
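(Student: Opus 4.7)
The plan is to reduce the statement to a peeling-decoder analysis on a bipartite graph ensemble, modified to account for the restricted peeling rule in the approximate matching case. The key structural observation is that, because Algorithm~\ref{Algo:decoder} only subtracts the contribution of a decoded variable node from bins classified originally as singletons or double-tons, the effective peeling process operates on the subgraph obtained by discarding every bin whose original degree (number of non-zero variable nodes incident to it) exceeds two. In this subgraph, an originally-singleton bin supplies the initial peelable nodes, and an originally-doubleton bin becomes a usable singleton only after one of its two non-zero neighbors has been peeled elsewhere. This is precisely the peeling model analyzed in \cite{lee2015saffron}, so I would adapt the density-evolution framework from that work to the RSIDFT graph ensemble.

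First, I would set up the ensemble: fix $d \geq 8$ stages with $f_i \approx N^{\alpha}$ bins, and observe that each of the $L$ non-zero variable nodes (matching positions) is incident to exactly one bin per stage, with the incidences chosen uniformly at random because the matching positions are random modulo the relatively prime factors $f_1,\dots,f_d$. Consequently, each variable node has left-degree $d$, and the number of non-zero variable nodes incident to a given bin is asymptotically Poisson with mean $\rho \coleq L/f_i$, of which we retain only degree $1$ and degree $2$ bins as active. Next, I would write the density-evolution recursion on this modified ensemble: letting $p_t$ denote the probability that a randomly chosen edge is still unresolved after iteration $t$, the recursion takes the form $p_{t+1} = (1 - \pi(p_t))^{d-1}$, where $\pi(p_t)$ is the probability that the bin at the other end of the edge is currently an active singleton, computed from the Poisson right-degree distribution truncated at two. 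Setting $p_0 = 1$, I would check that the fixed-point iteration converges to $0$ iff $\rho < \rho^{*}(d)$, and verify numerically for $d = 8$ that $\rho^{*}(8) = 1/0.7663$, which justifies the choice $F^{d-1} = 0.7663\,N^{\alpha}$ together with $L \leq N^{\alpha}$.

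To convert the density-evolution threshold into the two probabilistic statements of the lemma, I would invoke two standard concentration arguments. For the regime $L = \Omega(N^{\alpha})$ and $L \leq N^{\alpha}$, Wormald's differential-equation method (or the martingale/Doob concentration lemma used in \cite{pawar2014robust,lee2015saffron}) shows that the actual fraction of unrecovered variable nodes after $\Theta(\log N)$ peeling rounds concentrates around its density-evolution prediction, so choosing iterations large enough drives the expected leftover fraction below $\varepsilon = 10^{-3}$ while preserving an exponential-in-$N$ concentration bound. For the regime $L = o(N^{\alpha})$, the subgraph induced by the non-zero variable nodes is strongly sub-critical; I would follow the error-floor analysis of \cite{pawar2014robust} almost verbatim, bounding the probability that any stopping set of size $\leq 2l+1$ survives by a union bound of the form $O(L^{2l+1}/N^{\alpha(2l)})$ and optimizing $l$ to obtain the stated $1 - \exp(-\beta \varepsilon_1^{2} N^{\alpha/(4l+1)})$ bound.

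The main obstacle is the density-evolution computation in Step~2: one must carefully track which original bin degrees are excluded by the restricted peeling rule, write the correct Poisson-truncated recursion $\pi(p_t)$ whose form differs from \cite{pawar2014robust}, and numerically verify that the resulting fixed-point equation yields the constant $0.7663$ for $d = 8$. Once this threshold is established, the transition to finite $N$ via Wormald's method and the error-floor union bound are essentially the same as in \cite{pawar2014robust,lee2015saffron}; the graph ensemble is identical to the exact-matching one, so only the bin-side local analysis needs to be redone.
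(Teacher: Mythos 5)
Your proposal follows essentially the same route as the paper: the paper likewise identifies the restricted peeling rule (subtracting only from original singletons and double-tons) with the SAFFRON decoder of \cite{lee2015saffron}, invokes the Poisson bin-degree property of the RSIDFT Tanner graph from \cite{pawar2014robust}, reads off the constant $6.13/8 \approx 0.7663$ from SAFFRON's $d=8$ threshold for recovering a $(1-\varepsilon)$ fraction, and handles the $L = o(N^{\alpha})$ regime by the same stopping-set/error-floor argument you describe. The only difference is that you propose to rederive the density-evolution threshold from scratch rather than cite it (in doing so, note that the fixed point of the truncated-Poisson recursion converges to the residual fraction $\varepsilon = 10^{-3}$, not to zero, which is consistent with the first bullet of the lemma); this is more self-contained but not a different argument.
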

\begin{proof}
As mentioned earlier, the key difference in the approximate matching case is peeling off variable nodes from only singleton and double-tons. An identical peeling decoder is used and analyzed in the problem of group testing \cite{lee2015saffron} by Lee, Pedarsani and Ramchandran which the authors refer to as SAFFRON scheme. In SAFFRON, the authors claim that for a graph ensemble which has a regular degree of $d$ on the variable nodes and a  Poisson degree distribution on the bins, this peeling decoder with a left degree of $d=8$ and a total number of bins at least equal to $6.13 k$ recovers at least $(1-\varepsilon)$ fraction of the $k$ non-zero variable nodes with exponentially decaying probability. Note that $\frac{6.13}{8}\approx 0.7663$ is approximately the number of bins per stage for $d=8$. We also leverage the result from \cite{pawar2014robust} that the Tanner graph representation of the robust-FFAST (or equivaently RSDIFT framework) has a Poisson degree distribution on the bins. Combining these two results gives us the required results.
\end{proof}

\begin{theorem}[Overall Probability of Error] \label{thm:OverallErrorProb}
	The RSIDFT framework succeeds with high probability asymptotically in $N$ if the number of samples in each branch $f_i = N^{\alpha}+O(1)$ satisfies the condition $\alpha > 1-\mu$.
\end{theorem}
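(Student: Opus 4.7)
The plan is to decompose the overall failure probability of the RSIDFT framework into the three disjoint error events $\mc{E}_1$, $\mc{E}_2$, $\mc{E}_3$ introduced at the beginning of Section \ref{sec:analysis}, and then combine the per-event bounds already established in Lemmas \ref{Lem:binclassification}, \ref{Lem:posidentification}, and \ref{Lem:peeling_exact}/\ref{Lem:peeling_approximate} via a union bound over all bins and over all peeling iterations. The key observation driving everything is that the hypothesis $\alpha > 1-\mu$ is equivalent to $\mu+\alpha-1 > 0$, so $N^{\mu+\alpha-1}$ grows polynomially in $N$ and the per-bin error bounds decay super-polynomially.

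First I would count the total number of bin-level classification events entering the union bound. Across $d$ stages there are $\sum_{i=1}^{d} f_i = O(dN^{\alpha})$ initial bins. Every successful peel triggers at most $d$ bin re-classifications, and at most $L=O(N^\lambda)$ peels occur, so the total number of classification events is $O(dN^{\alpha}+dL)$. Applying Lemma \ref{Lem:binclassification} with a union bound gives
\[
\mbb{P}[\mc{E}_1] \;\leq\; O\bigl(d(N^\alpha+L)\bigr)\cdot 6\exp\!\left(-\tfrac{N^{\mu+\alpha-1}(1-6\eta)^2}{16}\right),
\]
which vanishes as $N\to\infty$ since the exponential decay in the polynomial $N^{\mu+\alpha-1}$ dominates the polynomial prefactor. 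The same argument applied to Lemma \ref{Lem:posidentification} bounds $\mbb{P}[\mc{E}_2]$: each non-zero variable node is recovered once, so there are at most $O(L)$ singleton position identifications, and union bounding yields a vanishing contribution.

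Conditioning on $\mc{E}_1^c\cap\mc{E}_2^c$, the decoder behaves exactly like the oracle-based peeling decoder, so $\mbb{P}[\mc{E}_3\mid \mc{E}_1^c\cap\mc{E}_2^c]$ is controlled by Lemma \ref{Lem:peeling_exact} in the exact matching case and Lemma \ref{Lem:peeling_approximate} in the approximate matching case. In both cases, the parameter choice $f_i=N^\alpha+O(1)$ with $\alpha>1-\mu$ and $d$ selected per Sec.~\ref{subsec:DesignParameters} guarantees that all $L$ matching positions are recovered with failure probability decaying polynomially (for $L=\Omega(N^\alpha)$) or exponentially (for $L=o(N^\alpha)$) in $N$. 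A final union bound
\[
\mbb{P}[\text{failure}] \;\leq\; \mbb{P}[\mc{E}_1] + \mbb{P}[\mc{E}_2] + \mbb{P}[\mc{E}_3\mid \mc{E}_1^c\cap\mc{E}_2^c]
\]
then produces the asymptotic claim.

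The main obstacle I anticipate is the careful bookkeeping of the union bound: because each peel causes neighboring bins to be reclassified, one must be sure the number of classification events is only polynomial in $N$ rather than, say, $2^{\Theta(N^\alpha)}$. The observation that each variable node is peeled at most once, and that each peel updates at most $d=O(1)$ bins, keeps this count at $O(N^\alpha+L)$. Once this counting is done, the rest is mechanical: $\mu+\alpha-1>0$ makes the exponent $\exp(-\Theta(N^{\mu+\alpha-1}))$ dominate any polynomial prefactor, so every term in the union bound vanishes and the theorem follows.
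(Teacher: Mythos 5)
Your proposal is correct and follows essentially the same route as the paper: a union bound over the three error events $\mc{E}_1$, $\mc{E}_2$, $\mc{E}_3$, combined with Lemmas \ref{Lem:binclassification}, \ref{Lem:posidentification}, \ref{Lem:peeling_exact} and \ref{Lem:peeling_approximate}, and the observation that $\alpha>1-\mu$ makes $N^{\mu+\alpha-1}$ grow polynomially so every exponential term vanishes. Your explicit accounting of the $O(dN^{\alpha}+dL)$ bin-classification events and the $O(L)$ position identifications is a more careful rendering of a step the paper leaves implicit, but it does not change the argument.
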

\begin{proof}
The overall probability of error $\mbb{P}[\mc{E}_{\text{total}}]$ can be bounded using an union bound on the three error events $\mc{E}_1$, $\mc{E}_2$ and $\mc{E}_3$ given by  	$$
\mbb{P}[\mc{E}_{\text{total}}] ~\leq~  \mbb{P}[\mc{E}_1] +  \mbb{P}[\mc{E}_2] + \mbb{P}[\mc{E}_3]
$$
Using the expressions for error probabilities from Lemmas \ref{Lem:binclassification}, \ref{Lem:posidentification}, \ref{Lem:peeling_exact} and \ref{Lem:peeling_approximate}, we can see that all the terms vanish to zero as $N \rightarrow \infty$ if $\mu+\alpha-1 \geq 0$, i.e. $\alpha \geq 1-\mu$.
\end{proof}

\section{Sample and Computational Complexity}
\label{Sec:Complexity}
In this section, we will analyze the sketching complexity which is the  number of samples we access from the sketch of the signal $\xv$ stored in the database and the computational complexity as a function of the system parameters.

\subsection{\bf Sample Complexity}\label{subsec:SampleComplexity}
In each branch of the RSDIFT framework we down-sample the $N$ samples by a factor of $\frac{N}{f_i}$ to get $f_i\approx N^{\alpha}$ samples. We repeat this for a random shift in each branch for $B=O(\log N)$ branches in each stage thus resulting in a total of $O(N^{\alpha}\log N)$ samples per block per stage. We repeat this for $d = \frac{1}{1-\alpha}$ such stages resulting in a total of $dN^{\alpha}\log N$ samples per block. So, the total number of samples is given by
\begin{align*}
S&= O \left(dN^{\alpha}\log N\right) =   O(N^{1-\mu}\log N)
\end{align*}

\subsection{\bf Computational Complexity} \label{subsec:ComputationComplexity}

As described in Eq.~\ref{eqn:Rxy_fourier}, the computation of $\rv$ involves three steps:
\begin{enumerate}
	\item  Operation - \RNum{1}:
	Since we assume that the sketch of database $\xv$, $\mathcal{F}_{N}\{\xv\}$, is pre-computed, we do not include this in computational complexity.

	\item  Operation - \RNum{2}:
	As described in Sec.~\ref{subsec:skteches}, in each branch $(i,j)$, we use a folding based technique to compute the sketch of $\yv$, $\mathcal{F}_{N}\{\yv'\}$ at points in the set $\mc{S}_{i,j}$. The folding technique involves two steps: folding and adding (aliasing) which has a complexity of $O(M)$ computations , and computing $f_i$-point DFTs that takes $O(N^\alpha \log N^{\alpha})$ computations. So, for a total of $dB$ branches the number of computations in this step is given by
	
	\begin{align*}
	 C_{\RNum{2}} \ &= ~  dB ~
	( \underset{\text{Folding} }{\underbrace{N^{\mu}}} + \ \
	\underset{\text{Shorter FFTs} }{\underbrace{N^{\alpha} \ \log N^{\alpha}}} \ )\\
	&= ~O(\max(N^{1-\mu}\log^2 N ,N^{\mu}\log N)).
	\end{align*}
	{\textit{Note:}} Folding and adding, for each shift, involves adding $N^{1-\alpha}$ vectors of length $N^{\alpha}$. We know that the length of the query is $M =N^{\mu}$, i.e., the number of non-zero elements in $\yv$ (zero-padded version of the query) is $M$ and hence we only need to compute $M$ additions instead of length of the vector $N$.

	\item  Operation - \RNum{3}:
	 Computing $\mathcal{F}_{N}^{-1}\{ \xv \}$ involves two parts:\\ RSIFT framework and the decoder. The RSIFT framework involves computing smaller $f_i$ point IDFTs, which takes approximately $O(N^{\alpha} \log N^{\alpha})$ computations in each branch. For a total of $dB$ branches, we get a complexity of $O(dB N^{\alpha}$ $\log N^{\alpha})$. In the decoding process, the dominant computation is from position identification. Each position identification process involves correlating the observation vector of length $B$ with $\frac{N}{f_i} \approx N^{1-\alpha}$ column vectors, which amounts to $B N^{1-\alpha}$ computations. There will be a maximum of $dL$ such position identifications, which gives a complexity of$O(dLBN^{1-\alpha} )$. Now, plugging in $\alpha = 1-\mu$ (condition for vanishing probability of error) the total number of computations involved in this step, $C_{\RNum{1}}$, is given by
	\begin{align*}	
	C_{\RNum{3}} \ &=  {d B}  ~(\hspace{-8pt}
	\underset{\text{Shorter IFFTs /block/stage} }{\underbrace{ O(N^{\alpha}  \log N^{\alpha})}} \hspace{-3pt}+ \underset{\text{Correlations} }{\underbrace{ L~N^{1-\alpha}}})
	\\
	&=  O(\max(N^{1-\mu}\log^2 N ,N^{\mu+\lambda}\log N)).
	\end{align*}

\end{enumerate}

Thus, the total number of computations, $C = \max\{C_{\RNum{2}},C_{\RNum{3}}\} $, is given by
  \begin{align*}
  C ~ = O(\max\{N^{1-\mu}\log^2 N, N^{\mu+\lambda}\log N \})
  \end{align*}

\section{Simulation Results}

\begin{figure*}[ht]
		\begin{tabular}{cc}
			\subfloat[$M=10^5(\mu=0.41), \tilde{N}=10^7, ~G=10^{5}$]{
%
%
\begin{tikzpicture}

\begin{axis}[%
width=2.521in,
height=1.566in,
at={(0.758in,0.481in)},
scale only axis,
xmin=100,
xmax=800,
xlabel={Sample Gain},
ymode=log,
ymin=1e-05,
ymax=0.1,
yminorticks=true,
ylabel={Prob of Missing a Match},
axis background/.style={fill=white}
]
\addplot [color=red,solid,mark=*,mark options={solid},forget plot]
  table[row sep=crcr]{%
167.8088	1e-07\\
218.0519	0.0005965\\
274.9922	0.0006834\\
338.1745	0.0012\\
484.384	0.027\\
750.2035	0.047\\
};
\end{axis}
\end{tikzpicture}
			\subfloat[$M=10^3(\mu=0.25),~ \tilde{N}=10^6, ~G=10^{6}$]{
%
%
\begin{tikzpicture}

\begin{axis}[%
width=2.521in,
height=1.566in,
at={(0.758in,0.481in)},
scale only axis,
xmin=2,
xmax=18,
xlabel={Sample Gain},
ymode=log,
ymin=1e-05,
ymax=1,
yminorticks=true,
ylabel={Prob of Missing a Match},
axis background/.style={fill=white}
]
\addplot [color=red,solid,mark=*,mark options={solid},forget plot]
  table[row sep=crcr]{%
2.0985	4e-07\\
3.2978	0.004\\
3.9974	0.004\\
4.7636	0.006\\
5.9963	0.017\\
7.3622	0.038\\
9.4944	0.054\\
17.4904	0.197\\
};
\end{axis}
\end{tikzpicture}
		\end{tabular}
		
		\caption{Plots of probability of missing a match vs. sample gain for exact matching of a query of length $M$ from a equiprobable  binary \{+1,-1\} sequence of length $N= 10^{12}$, divided into $G$ blocks each of length $\tilde{N}$. The substring was simulated to repeat in $L=10^6$($\lambda=0.5$) locations uniformly at random.} \label{Fig:Simulation Results}
\end{figure*}
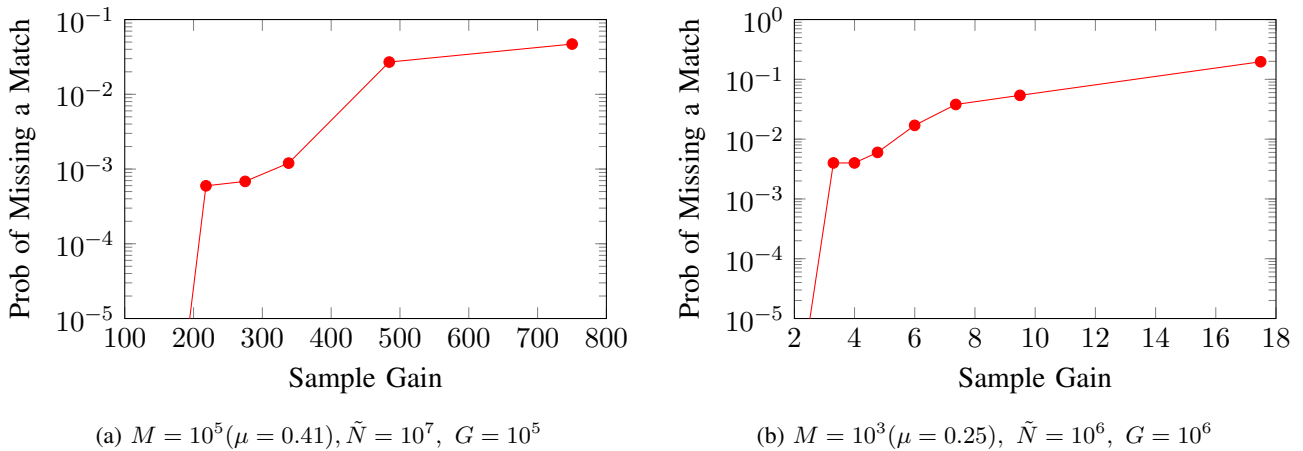

Simulations were carried out to test the performance of RSIDFT framework for exact matching scenario on a database of length $N=10^{12}$ for two different query lengths $M=10^5$ ($\mu = 0.41$) and $M=10^3$ ($\mu = 0.25$). The database was generated as a equiprobable $\{+1,-1\}$ sequence of length $N$. A substring of length $M$ from the generated database is presented as a query. Also the chosen query was repeated at $L=10^6$ randomly chosen locations in the database.

The sample gain, defined as the ratio of $N$ to the number of samples used from the sketch of database, was varied and the probability of RSIDFT framework to miss a match ($P_e$), as defined below, was measured.
\[P_e = \frac{\text{\# of correctly identified locations}}{L} \]   
The plots of $P_e$ vs. sample gain, is presented in Fig.~\ref{Fig:Simulation Results} for two different query lengths: $M=10^5~(\mu=0.41)$ in Fig~\ref{Fig:Simulation Results}(a) and $M=10^3~(\mu=0.25)$ in Fig~\ref{Fig:Simulation Results}(b). As can be inferred from the plots we achieve a sample gain of 200-300 (depending on the tolerable error probability) for the query length corresponding to  $\mu=0.41$ and a sample gain of $2$-$4$ for $\mu=0.25$. This sample gain results from an average number of samples per branch $f_i \approx 9.25 \times10^7 $ ($\alpha=0.66$) for $\mu=0.41$, and  $f_i \approx 6.94\times10^9 $ ($\alpha=0.82$) for $\mu=0.25$. The trend in the results almost matches with the theoretical findings of $\alpha = 1-\mu$. We also notice a sharp threshold in the sample gain, below which the RSIDFT framework succeeds with very high probability. 

\appendix
\section{Chernoff Bounds}

\begin{lemma}[Hoeffding tail bound]
\label{Lem:Chernoff}
Let $X_1, X_2,\ldots, X_n$ be a sequence of independent random variables such that $X_i$ has mean $\mu_i$ and sub-Gaussian parameter $\sigma_i$. Then for any $\delta>0$:
\begin{align*}
\text{\textbf{Upper Tail}}: ~&\mbb{P}\left[\sum \left(X_i-\mu_i\right)\geq \delta\right]\leq \exp\left\lbrace-\frac{\delta^2}{2\sum \sigma_i^2}\right\rbrace\\
\textbf{Lower Tail}: ~&\mbb{P}\left[\sum \left(X_i-\mu_i\right)\leq -\delta\right]\leq \exp\left\lbrace-\frac{\delta^2}{2\sum \sigma_i^2}\right\rbrace
\end{align*}
Note that for bounded random variables $X_i\in [a,b]$ the sub-Gaussian parameter is $\sigma_i=\frac{b-a}{2}$ whereupon the upper tail Hoeffding bound can be simplified to
\begin{align}
\mbb{P}\left[\sum_{i=1}^{n} \left(X_i-\mu_i\right)\geq \delta\right]\leq \exp\left\lbrace-\frac{2\delta^2}{n(b-a)^2}\right\rbrace.
\label{Eqn:HoeffdingBoundedRV}
\end{align}
Similarly the lower tail bound can be simplified.
\end{lemma}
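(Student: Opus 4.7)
The plan is to invoke the standard Chernoff-bounding method combined with the defining moment generating function (MGF) property of sub-Gaussian random variables. First, for the upper tail, I would apply Markov's inequality to the exponentiated sum: for any $t > 0$,
\begin{align*}
\mbb{P}\left[\sum_{i=1}^n (X_i - \mu_i) \geq \delta\right] = \mbb{P}\left[e^{t\sum_{i=1}^n (X_i - \mu_i)} \geq e^{t\delta}\right] \leq e^{-t\delta}\, \expt\!\left[e^{t\sum_{i=1}^n (X_i - \mu_i)}\right].
\end{align*}
By independence of the $X_i$, the expectation factors as $\prod_{i=1}^n \expt[e^{t(X_i - \mu_i)}]$.

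Next, I would invoke the sub-Gaussian hypothesis, whose defining property is that $\expt[e^{t(X_i-\mu_i)}] \leq \exp(t^2 \sigma_i^2 / 2)$ for all $t \in \mbb{R}$. Multiplying these bounds yields
\begin{align*}
\mbb{P}\left[\sum_{i=1}^n (X_i - \mu_i) \geq \delta\right] \leq \exp\!\left(-t\delta + \tfrac{t^2}{2}\textstyle\sum_{i=1}^n \sigma_i^2\right).
\end{align*}
The right-hand side is a convex quadratic in $t$, minimized at $t^\star = \delta / \sum_i \sigma_i^2$. Substituting $t^\star$ collapses the exponent to $-\delta^2/(2\sum_i \sigma_i^2)$, which is the claimed upper-tail bound. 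For the lower tail, I would apply the identical argument to the random variables $\{-X_i\}$, which are independent and sub-Gaussian with the same parameters $\sigma_i$ (sub-Gaussianity is symmetric since the bound on the MGF holds for all real $t$).

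Finally, for the bounded-random-variable specialization, I would invoke Hoeffding's lemma: if $X \in [a,b]$ almost surely with $\expt[X] = \mu$, then $\expt[e^{t(X-\mu)}] \leq \exp(t^2(b-a)^2/8)$, i.e. $X$ is sub-Gaussian with parameter $\sigma = (b-a)/2$. Plugging $\sigma_i^2 = (b-a)^2/4$ into the general bound produces $\exp(-\delta^2 / \bigl(2 \cdot n(b-a)^2/4\bigr)) = \exp(-2\delta^2/(n(b-a)^2))$, matching \eqref{Eqn:HoeffdingBoundedRV}. The only nontrivial ingredient is Hoeffding's lemma itself, which is the main obstacle; the cleanest derivation observes that for $X \in [a,b]$ the cumulant generating function $\psi(t) = \log \expt[e^{t(X-\mu)}]$ satisfies $\psi(0)=\psi'(0)=0$ and $\psi''(t) \leq (b-a)^2/4$ (since the variance of any random variable supported on $[a,b]$ is at most $(b-a)^2/4$), so a second-order Taylor expansion yields $\psi(t) \leq t^2(b-a)^2/8$. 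With this lemma in hand, everything else is a mechanical optimization.
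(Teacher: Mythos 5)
Your proof is correct: the paper states this lemma without proof, treating it as the standard Hoeffding/sub-Gaussian tail bound, and your derivation---Markov's inequality applied to the exponentiated sum, the sub-Gaussian MGF bound, optimization at $t^\star = \delta/\sum_i \sigma_i^2$, and Hoeffding's lemma via $\psi''(t) \leq (b-a)^2/4$ for the bounded specialization---is exactly the canonical argument that such an unproved citation presupposes. Nothing is missing; the lower-tail reduction to the variables $\{-X_i\}$ and the constant bookkeeping yielding \eqref{Eqn:HoeffdingBoundedRV} are both handled correctly.
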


\begin{lemma}[Tail bounds for noise terms]
	\label{Lem:tailbounds}
Let us consider $r[\theta_0],r[\theta_1],\ldots ,r[\theta_{g_i-1}]$ where $\theta_j=\theta_0+jf_i$ and $\theta_j \notin \{\tau_1,\ldots, \tau_L\}$ is not one of the matching positions for any $j$. Then for any $\delta>0$:\\
{\bf Upper Tail:}
\begin{align*}
\mbb{P}\left[ \left(\frac{1}{M}\sum\limits_{j\in[g_i]}\sum\limits_{k\in[M]}x[\theta_j+k]y[k]\right)\geq \delta\right]\leq \exp\left\lbrace-\frac{M\delta^2}{2g_i}\right\rbrace\\
\end{align*}
{\bf Lower Tail:}
\begin{align*}
\mbb{P}\left[ \left(\frac{1}{M}\sum\limits_{j\in[g_i]}\sum\limits_{k\in[M]}x[\theta_j+k]y[k]\right)\geq \delta\right]\leq \exp\left\lbrace-\frac{M\delta^2}{2g_i}\right\rbrace\\
\end{align*}
Recall that $[g_i]$ is used to denote the set $\{0,1,\ldots,g_i-1\}$.
\end{lemma}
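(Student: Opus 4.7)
The plan is to apply the Hoeffding inequality of Lemma \ref{Lem:Chernoff} to the double sum $S := \sum_{j\in[g_i]}\sum_{k\in[M]} x[\theta_j+k]\,y[k]$, regarded as a sum of (essentially) $Mg_i$ independent, bounded, zero-mean random variables. The stated bound $\mathbb{P}[S\geq M\delta]\leq \exp(-M\delta^2/(2g_i))$ should then drop out by substituting $n=Mg_i$ and range $b-a=2$ into the bounded-RV form of Lemma \ref{Lem:Chernoff}: indeed $\exp(-2(M\delta)^2/(Mg_i\cdot 4))=\exp(-M\delta^2/(2g_i))$.

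First I would verify the two ingredients that Hoeffding needs. Each summand $Z_{j,k}:=x[\theta_j+k]\,y[k]$ lies in $\{-1,+1\}$ and so has sub-Gaussian parameter at most $1$. For the zero-mean property, the hypothesis $\theta_j\notin\{\tau_1,\ldots,\tau_L\}$ ensures that $x[\theta_j+k]$ is never deterministically tied to $y[k]$: in the worst case $y=x[\tau:\tau+M-1]$ for some matching position $\tau$, the condition $\theta_j\neq \tau$ gives $\theta_j+k\neq \tau+k$, so by independence and the $\pm 1$-symmetry of the $x$-entries, $\mathbb{E}[x[\theta_j+k]\,x[\tau+k]]=0$.

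To control the (potential) dependence between the $Mg_i$ products, I would regroup $S$ by the underlying $x$-index, writing $S=\sum_m c_m\,x[m]$ with coefficients $c_m := \sum_{(j,k):\,\theta_j+k=m} y[k]$, so that the $x[m]$'s are now genuinely independent Rademacher variables. Hoeffding applied to this weighted Rademacher sum yields $\mathbb{P}[S\geq M\delta]\leq \exp\bigl(-M^2\delta^2/(2\sum_m c_m^2)\bigr)$, and it therefore suffices to check $\sum_m c_m^2 \leq Mg_i$. In the typical design regime $f_i\geq M$, the indices $\{\theta_j+k:j\in[g_i],k\in[M]\}$ are pairwise distinct (since $(j_1{-}j_2)f_i=k_2{-}k_1$ is impossible for $|j_1{-}j_2|\geq 1$ and $|k_1{-}k_2|<M\leq f_i$), so each nonzero $c_m\in\{-1,+1\}$, there are exactly $Mg_i$ of them, and $\sum_m c_m^2 = Mg_i$ precisely. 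The lower-tail bound follows by applying the identical argument to $-S$, which has the same distribution by the $x\leftrightarrow -x$ symmetry.

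The main obstacle is the complementary regime $f_i<M$ (which can arise for $\mu>1/2$), where colliding indices $\theta_j+k$ can in principle push $\sum_m c_m^2$ above $Mg_i$. To handle this I would condition on the sub-vector of $x$ that determines $y$ (when $y$ is a substring of $x$), which renders the remaining $x[m]$'s appearing in the sum independent of $y$, and then argue that the number of overlap terms per $j$ is $O(M/f_i + 1)$ and that the resulting cross-products can be rebalanced by a Cauchy--Schwarz / grouping argument to keep $\sum_m c_m^2$ within a constant factor of $Mg_i$. This is the only non-routine step; everything else is a direct invocation of Lemma \ref{Lem:Chernoff}.
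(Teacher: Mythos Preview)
Your outline coincides with the paper's proof in structure: both split into the case $M\leq f_i$ (equivalently $\mu<\alpha$), where the $Mg_i$ products $x[\theta_j+k]y[k]$ are genuinely independent $\pm 1$ variables and Lemma~\ref{Lem:Chernoff} applies directly, and the case $M>f_i$, where both you and the paper regroup the double sum by the underlying $x$-index. Your rewriting $S=\sum_m c_m x[m]$ is exactly the paper's decomposition $S=\sum_{j\in[g_i]}\sum_{l\in[f_i]} p_{j,l}$ with $p_{j,l}=x[\theta_j+l]\sum_{k\in[M/f_i]} y[l+kf_i]$, since the indices $\theta_j+l=\theta_0+jf_i+l$ are pairwise distinct.

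The divergence is in the final step of the $M>f_i$ case, and here your sketch has a gap. After conditioning on $y$, you need $\sum_m c_m^2\leq Mg_i$, but a deterministic Cauchy--Schwarz bound gives only $\sum_m c_m^2=g_i\sum_{l\in[f_i]} c_l^2\leq g_i\cdot f_i\,(M/f_i)^2=g_iM\cdot(M/f_i)$, which is too large by the polynomial factor $M/f_i=N^{\mu-\alpha}$; the inequality $\sum_l c_l^2\leq M$ simply fails for, say, constant $y$. No amount of rebalancing of cross-products fixes this without using the randomness of $y$. The paper avoids this by \emph{not} conditioning on $y$: it argues that each $p_{j,l}$, viewed as a function of the independent database bits $x[\theta_j+l]$ and $\{y[l+kf_i]\}_k$, is itself sub-Gaussian with parameter $\sqrt{M/f_i}$ (being distributed as a Rademacher sum of length $M/f_i$), and then sums the sub-Gaussian parameters over the $g_if_i$ terms to obtain $\sum\sigma^2=g_if_i\cdot(M/f_i)=Mg_i$ exactly. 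In other words, the required $\sum_m c_m^2\approx Mg_i$ holds only in expectation over $y$, and the paper bakes that randomness into the sub-Gaussian parameter rather than trying to control $\sum_m c_m^2$ pathwise.
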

\begin{proof}
Since $\theta_j$ is not one of the matching positions for any $j$, $x[\theta_j+k]\neq y[k]$ and more importantly $x[\theta_j+k]\indep y[k] ~\forall j,k$. This implies that $x[\theta_j+k]y[k]=\pm 1$ with equal probability and $\mbb{E}[x[\theta_i+k]y[k]]=0$. Let the set of random variables corresponding to a position $\theta_j$ be $S_{j}\coleq \{x[\theta_j+k]y[k],k\in[M]\}$. It is clear that the random variables in the set $S_j$ are all independent with respect to each other due to our i.i.d assumption on the database $\xv$ and $\theta_j$ being a non-matching position. For the case of $\mu<\alpha$ we have $M<f_i$ for large enough $N$ thus resulting in non-overlapping parts of $\xv$  participating in the correlation coefficients $r[\theta_i]$ and $r[\theta_j]$. Hence it can be shown that $S_i\indep S_j~~\forall i,j$ and we can apply the bounds from Eqn. \eqref{Eqn:HoeffdingBoundedRV} achieve the required result.

For the case of $\mu\geq\alpha$, $M>f_i$ for large enough $N$ which results in a coefficient $x[j]$ participating in multiple correlation coefficients $r[\theta_j]$. Therefore we pursue an alternate method of proof by defining 
$$
p_{j,l}\coleq x[\theta_j+l]\sum_{k\in[\frac{M}{f_i}]} y[l+kf_i] ~\text{ for } l\in[f_i],
$$
where $\theta_j=\theta_0+jf_i$. W.L.O.G we assume that $f_i$ divides $M$ evenly although the proof can be extended on similar lines for the case where $f_i$ does not divide $M$ evenly. Now we can show that the required sum
\begin{align*}
\sum\limits_{j\in[g_i]}\sum\limits_{l\in[M]}x[\theta_j+l]y[l]=\sum_{j\in[g_i]}\sum_{l\in[f_i]}p_{j,l}.
\end{align*}
From the above equivalent representation of the required sum, we need the following to achieve the required result:
\begin{itemize}
\item $p_{j,l}$ is sub-Gaussian with parameter $\sigma_i=\sqrt[•]{\frac{M}{f_i}}$ since, from Eqn. \eqref{Eqn:HoeffdingBoundedRV}, 
\begin{align*}
\mbb{P}\left[ p_{j,l}\leq \delta\right] \leq \exp\left\lbrace \frac{-\delta^2}{2\frac{M}{f_i}}\right\rbrace
\end{align*} 
\item The random variables $\{p_{j,l},~\forall j,l\}$ are independent of each other due to the i.i.d assumption on the database
\end{itemize}
 Now we can apply the tail bounds for sum of $g_i f_i$ sub-Gaussian random variables each with sub-Gaussian parameter $\sqrt{\frac{M}{f_i}}$ from Lem. \ref{Lem:Chernoff} and arrive at the required result.
\end{proof}


\section{Bin Classification Errors}
\label{Append:BinClassif}
We employ classification rules based only on the first element of the measurement vector at bin $(i,j)$ which can be given by
\begin{align}
Z[1]=\begin{cases}
\sum\limits_{\ell=0}^{g_{i}-1}\sum\limits_{k=0}^{M-1} n_{l,k}  & ~~\text{ if } ~~ \msc{H}=\msc{H}_z\label{Eqn:BinCombination}\\
\vspace{\vgap}
M_1+\sum\limits_{\ell=0}^{g_{i}-2}\sum\limits_{k=0}^{M-1} n_{l,k}  & ~~\text{ if } ~~ \msc{H}=\msc{H}_s\\
\vspace{\vgap}
M_1+M_2+\sum\limits_{\ell=0}^{g_{i}-3}\sum\limits_{k=0}^{M-1} n_{l,k}  & ~~\text{ if } ~~ \msc{H}=\msc{H}_d\\
\end{cases}
\end{align}
where $n_{l,k}=x[\theta_{\ell}+k]y[k]$ and $\theta_{\ell}\notin\{\tau_1,\tau_2,\ldots,\tau_L\}$. Also for the case of exact matching $M_1=M_2=M$ whereas in the case of approximate matching the values of $M_1,M_2\in[M(1-2\eta):M]$.

\begin{lemma}[zero-ton]
\label{Lem:ZerotonClassif}
Given that the bin $(i,j)$ is a zero-ton, the classification error can be bounded by
\begin{align*}
\mbb{P}[\mc{E}_1|\msc{H}_z]\leq e^{-\frac{N^{\mu+\alpha-1}(1-2\eta)^2}{8}}
\end{align*}
\end{lemma}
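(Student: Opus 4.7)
The plan is to apply the noise tail bound from Lemma~\ref{Lem:tailbounds} directly. Since bin $(i,j)$ is a zero-ton, the first observation $Z[1]$ takes the form given in the first case of Eq.~\eqref{Eqn:BinCombination}, namely a pure noise sum
\[
Z[1]=\sum_{\ell=0}^{g_i-1}\sum_{k=0}^{M-1}n_{\ell,k},\qquad n_{\ell,k}=x[\theta_\ell+k]\,y[k],
\]
with $\theta_\ell\notin\{\tau_1,\ldots,\tau_L\}$. By the bin classification rule in Eq.~\eqref{Eqn:BinClassifApprox}, a misclassification conditioned on $\msc{H}_z$ occurs precisely when $Z[1]/M\geq \gamma_1=(1-2\eta)/2$, so
\[
\mbb{P}[\mc{E}_1\mid \msc{H}_z] \;=\; \mbb{P}\!\left[\frac{1}{M}\sum_{\ell\in[g_i]}\sum_{k\in[M]}x[\theta_\ell+k]y[k]\;\geq\;\frac{1-2\eta}{2}\right].
\]

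First I would invoke the upper tail inequality in Lemma~\ref{Lem:tailbounds} with $\delta=(1-2\eta)/2$, yielding the bound $\exp\bigl(-\tfrac{M\delta^2}{2g_i}\bigr)=\exp\bigl(-\tfrac{M(1-2\eta)^2}{8 g_i}\bigr)$. Then I would substitute the design parameters $M=N^\mu$ and $g_i=N/f_i=N^{1-\alpha}$, giving $M/g_i=N^{\mu+\alpha-1}$, which produces exactly the stated expression
\[
\mbb{P}[\mc{E}_1\mid \msc{H}_z]\;\leq\;\exp\!\left(-\frac{N^{\mu+\alpha-1}(1-2\eta)^2}{8}\right).
\]

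There is essentially no obstacle here: the heavy lifting, namely the sub-Gaussian concentration (in both the $\mu<\alpha$ and $\mu\geq\alpha$ regimes), has already been packaged into Lemma~\ref{Lem:tailbounds}. The only point that needs a brief justification is that the rule compares $z/M$ to $\gamma_1$, so the threshold translates into $\delta=(1-2\eta)/2$ in the tail bound; once this substitution is made, the result falls out immediately.
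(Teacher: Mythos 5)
Your proof is correct and follows essentially the same route as the paper's: identify the misclassification event as $Z[1]/M\geq(1-2\eta)/2$, apply the upper tail of Lemma~\ref{Lem:tailbounds} with $\delta=(1-2\eta)/2$ to get $\exp(-M(1-2\eta)^2/(8g_i))$, and substitute $M=N^\mu$, $g_i\approx N^{1-\alpha}$. No gaps.
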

\begin{proof}
The above expression can be derived by observing that a bin is not classified as zero-ton if $\frac{Z[1]}{M}\geq\frac{1-2\eta}{2}$. Let us denote the probability of this event as $p_{z1}$ which can be bounded as:
\begin{align*}
p_{z1}=&\mbb{P}\left[\frac{Z[1]}{M}\geq\frac{1-2\eta}{2}\right]\\
&\leq e^{-\frac{Mg_i(1-2\eta)^2}{8g_i^{2}}}\\
&\approx e^{-\frac{N^{\mu+\alpha-1}(1-2\eta)^2}{8}}
\end{align*} 
where the second bound is due to Eqn. \eqref{Eqn:BinCombination} and Lemma.~\ref{Lem:tailbounds}. The approximation in the third line is from our design that all the $g_i$ are chosen such that $g_i\approx N^{1-\alpha}$ and $M=N^{\mu}.$
\end{proof}

\begin{lemma}[singleton]
\label{Lem:SingletonClassif}
Given that the bin $(i,j)$ is a singleton, the classification error can be bounded by
\begin{align*}
\mbb{P}[\mc{E}_1|\msc{H}_s]\leq 2e^{-\frac{N^{\mu+\alpha-1}(1-4\eta)^2}{16}}
\end{align*}
\end{lemma}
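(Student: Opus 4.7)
The plan is to decompose the singleton misclassification event into two sub-events, translate each into a tail event on the accumulated noise sum, and then apply the concentration bound from Lemma~\ref{Lem:tailbounds}.

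First, I would invoke the singleton representation from Eq.~\eqref{Eqn:BinCombination}: given $\widehat{\msc{H}}_{i,j}=\msc{H}_s$, one has $Z[1] = M_1 + \sum_{\ell=0}^{g_i-2}\sum_{k=0}^{M-1} n_{\ell,k}$, where $M_1/M \in [1-2\eta,\,1]$ (in the exact matching case $M_1/M = 1$ identically). Per the rule in Eq.~\eqref{Eqn:BinClassifApprox}, a misclassification occurs exactly when $Z[1]/M \le \gamma_1 = (1-2\eta)/2$ (wrongly labeled zero-ton) or $Z[1]/M \ge \gamma_2 = (3-4\eta)/2$ (wrongly labeled double-ton or multi-ton). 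A union bound over these two disjoint events will give the prefactor $2$ in the claim.

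Second, I would write $W := \frac{1}{M}\sum_{\ell,k} n_{\ell,k}$ for the normalized noise sum. Since $M_1/M \ge 1-2\eta$, the first event forces $W \le (1-2\eta)/2 - M_1/M \le -(1-2\eta)/2$; since $M_1/M \le 1$, the second event forces $W \ge (3-4\eta)/2 - M_1/M \ge (1-4\eta)/2$. For $\eta < 1/6$ both slacks are strictly positive, so the tail bounds of Lemma~\ref{Lem:tailbounds} apply. The worst-case (smaller) slack is $(1-4\eta)/2$, attained on the upper side when $M_1=M$.

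Third, I would plug the worst-case slack into Lemma~\ref{Lem:tailbounds} with $g_i - 1 \approx g_i \approx N^{1-\alpha}$ noise terms and $M = N^{\mu}$ summands inside each, yielding a bound of the form $\exp\bigl(-c\,M\,(1-4\eta)^2/g_i\bigr) = \exp\bigl(-c\,N^{\mu+\alpha-1}(1-4\eta)^2\bigr)$ for an absolute constant $c$. Finally I would double it for the union bound, matching the stated form $2\exp\!\bigl(-N^{\mu+\alpha-1}(1-4\eta)^2/16\bigr)$.

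The main obstacle will be tracking the exact constant inside the exponent, since the constant depends on which form of the Hoeffding/sub-Gaussian inequality is invoked (direct Hoeffding on $\pm 1$ variables, or sub-Gaussian block-decomposition through $p_{j,l}$ as used in the proof of Lemma~\ref{Lem:tailbounds} for the regime $\mu \ge \alpha$), and on whether one uses the full slack $(1-4\eta)/2$ or retains a safety factor. The structural steps (splitting, worst-case slack, union bound) are routine given the zero-ton argument in Lemma~\ref{Lem:ZerotonClassif}; the only delicate point is choosing the block decomposition of the noise sum to guarantee the independence needed for the Hoeffding bound when $M$ and $g_i$ are of comparable order.
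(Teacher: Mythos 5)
Your proposal follows essentially the same route as the paper's proof: split the singleton misclassification into the two threshold-crossing events at $\gamma_1$ and $\gamma_2$, reduce each to a tail event on the normalized noise sum using $M_1/M\in[1-2\eta,1]$ (yielding slacks $(1-2\eta)/2$ and $(1-4\eta)/2$), apply Lemma~\ref{Lem:tailbounds}, and absorb both terms into twice the weaker bound governed by the slack $(1-4\eta)/2$. Your caveat about pinning down the constant $16$ is fair, since the paper's own intermediate constants are not applied uniformly, but the argument is the intended one.
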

\begin{proof}
We observe that a bin is not classified as singleton if $\frac{Z[1]}{M}\leq\frac{1-2\eta}{2}$ or $\frac{Z[1]}{M}\geq\frac{3-4\eta}{2}$. Let us denote the probability of the two events as $p_{s1}$ and $p_{s2}$ respectively which can be bounded as:
\begin{align*}
p_{s1}&=\mbb{P}\left[\frac{1}{M}\sum\limits_{\ell=0}^{g_{i}-2}\sum\limits_{k=0}^{M-1} n_{l,k}\leq\frac{1-2\eta}{2}-\frac{M_1}{M}\right]\\
&\leq \mbb{P}\left[\frac{1}{M}\sum\limits_{\ell=0}^{g_{i}-2}\sum\limits_{k=0}^{M-1} n_{l,k}\leq-\frac{1-2\eta}{2}\right]\\
&\leq e^{-\frac{Mg_i(1-2\eta)^2}{16g_i^{2}}}\\
&\approx e^{-\frac{N^{\mu+\alpha-1}(1-2\eta)^2}{16}}
\end{align*} 
where we used  {\it lower tail} of Lemma \ref{Lem:Chernoff} and $g_i\approx N^{1-\alpha}$ and the lower bound on $\frac{M_1•}{M}\geq (1-2\eta)$. Similarly $p_{s2}$ can be upper bounded by:
\begin{align*}
p_{s2}&=\mbb{P}\left[\frac{1}{M}\sum\limits_{\ell=0}^{g_{i}-2}\sum\limits_{k=0}^{M-1} n_{l,k}\geq\frac{3-4\eta}{2}-\frac{M_1}{M}\right]\\
&\leq \mbb{P}\left[\frac{1}{Mg_i}\sum\limits_{\ell=0}^{g_{i}-2}\sum\limits_{k=0}^{M-1} n_{l,k}\geq-\frac{1-4\eta}{2g_i}\right]\\
&\approx e^{-\frac{N^{\mu+\alpha-1}(1-4\eta)^2}{8}}
\end{align*} 
Thus the overall probability of error for classifying a singleton can be obtained by combining $p_{s1}$ and $p_{s2}$.
\end{proof}

\begin{lemma}[double-ton]
\label{Lem:DoubletonClassif}
Given that the bin $(i,j)$ is a double-ton, the classification error can be bounded by
\begin{align*}
\mbb{P}[\mc{E}_1|\msc{H}_d]\leq 2e^{-\frac{N^{\mu+\alpha-1}(1-6\eta)^2}{16}}
\end{align*}
\end{lemma}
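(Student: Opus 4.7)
The plan is to mirror the zero-ton and singleton arguments: split the double-ton misclassification event into a lower-tail piece and an upper-tail piece, bound each via Lemma \ref{Lem:tailbounds}, and combine with a union bound. By the decision rule in Eqn.~\eqref{Eqn:BinClassifApprox}, given the true hypothesis is $\msc{H}_d$, the bin is misclassified iff $Z[1]/M \leq \gamma_2 = (3-4\eta)/2$ (declared zero-ton or singleton) or $Z[1]/M \geq \gamma_3 = (5-6\eta)/2$ (declared multi-ton). Call the probabilities of these two events $p_{d1}$ and $p_{d2}$.

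Next I would substitute the double-ton form of $Z[1]$ from Eqn.~\eqref{Eqn:BinCombination}, namely $Z[1] = M_1 + M_2 + \sum_{\ell=0}^{g_i-3}\sum_{k=0}^{M-1} n_{\ell,k}$, where the two signal contributions satisfy $M_1, M_2 \in [M(1-2\eta),\,M]$ since each matching position has Hamming distance at most $K=\eta M$ from $\yv$, and the remaining double sum is the zero-mean bounded noise that Lemma \ref{Lem:tailbounds} handles (with the position count $g_i - 2 \approx g_i$ for large $N$). For the lower event, the worst case is the smallest possible $M_1+M_2$, namely $2M(1-2\eta)$, yielding
\begin{align*}
p_{d1} \leq \mbb{P}\!\left[\tfrac{1}{M}\sum_{\ell,k} n_{\ell,k} \leq \tfrac{3-4\eta}{2} - 2(1-2\eta)\right] = \mbb{P}\!\left[\tfrac{1}{M}\sum_{\ell,k} n_{\ell,k} \leq -\tfrac{1-4\eta}{2}\right],
\end{align*}
to which the lower-tail half of Lemma \ref{Lem:tailbounds} applies. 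For the upper event the worst case is the largest $M_1+M_2 = 2M$, giving
\begin{align*}
p_{d2} \leq \mbb{P}\!\left[\tfrac{1}{M}\sum_{\ell,k} n_{\ell,k} \geq \tfrac{5-6\eta}{2} - 2\right] = \mbb{P}\!\left[\tfrac{1}{M}\sum_{\ell,k} n_{\ell,k} \geq \tfrac{1-6\eta}{2}\right],
\end{align*}
handled by the upper-tail half.

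Applying the tail bound in each case with $\delta$ equal to $(1-4\eta)/2$ and $(1-6\eta)/2$ respectively, and substituting $M = N^\mu$, $g_i \approx N^{1-\alpha}$ as in the zero-ton/singleton proofs, produces bounds of the form $\exp(-N^{\mu+\alpha-1}(1-4\eta)^2/16)$ and $\exp(-N^{\mu+\alpha-1}(1-6\eta)^2/16)$ for $p_{d1}$ and $p_{d2}$ respectively. Because $1-6\eta < 1-4\eta$ (for $\eta > 0$), the upper-tail estimate is the weaker one and dominates; the union bound $\mbb{P}[\mc{E}_1\mid\msc{H}_d] \leq p_{d1} + p_{d2}$ therefore collapses to twice the weaker term, giving exactly $2\exp\{-N^{\mu+\alpha-1}(1-6\eta)^2/16\}$.

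I do not anticipate any serious obstacle: the argument is a near-verbatim template of Lemma \ref{Lem:SingletonClassif}, and the only real content is the asymmetric worst-casing of $M_1+M_2$ (small for the lower event, large for the upper event), which is what produces the asymmetric exponents $(1-4\eta)^2$ versus $(1-6\eta)^2$. One minor point worth flagging is the implicit use of $\eta < 1/6$, which is exactly the regime $K \leq M/6$ assumed throughout the paper, so that both thresholds $-(1-4\eta)/2$ and $(1-6\eta)/2$ have the correct sign for the one-sided Hoeffding bounds to be nontrivial.
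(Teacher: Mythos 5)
Your proposal is correct and follows essentially the same route as the paper's proof: the same split into the two tail events at thresholds $\gamma_2$ and $\gamma_3$, the same asymmetric worst-casing of $M_1+M_2$ (using $M_1,M_2\geq M(1-2\eta)$ for the lower event and $M_1,M_2\leq M$ for the upper), the same application of Lemma~\ref{Lem:tailbounds}, and the same final step of bounding both terms by the weaker $(1-6\eta)^2$ exponent. The only cosmetic difference is the intermediate constant in the $p_{d2}$ exponent ($1/16$ versus the paper's $1/8$), which does not affect the stated conclusion.
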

\begin{proof}
We observe that a bin is not classified as double-ton if $\frac{Z[1]}{M}\leq\frac{3-4\eta}{2}$ or $\frac{Z[1]}{M}\geq\frac{5-6\eta}{2}$. Let us denote the probability of these two events as $p_{d1}$ and $p_{d2}$ respectively which can be bounded similar to Lemma ~\ref{Lem:SingletonClassif}.
\begin{align*}
p_{d1}&=\mbb{P}\left[\frac{1}{M}\sum\limits_{\ell=0}^{g_{i}-3}\sum\limits_{k=0}^{M-1} n_{l,k}\leq\frac{3-4\eta}{2}-\frac{M_1+M_2}{M}\right]\\
&\leq \mbb{P}\left[\frac{1}{M}\sum\limits_{\ell=0}^{g_{i}-3}\sum\limits_{k=0}^{M-1} n_{l,k}\leq-\frac{1-4\eta}{2}\right]\\
&\leq e^{-\frac{M(g_i-2)(1-4\eta)^2}{16(g_i-2)^{2}}}\\
&\approx e^{-\frac{N^{\mu+\alpha-1}(1-4\eta)^2}{16}}.
\end{align*} 
Similarly $p_{d2}$ can be bounded as 
\begin{align*}
p_{d2}&=\mbb{P}\left[\frac{1}{M}\sum\limits_{\ell=0}^{g_{i}-3}\sum\limits_{k=0}^{M-1} n_{l,k}\geq\frac{5-6\eta}{2}-\frac{M_1+M_2}{M}\right]\\
&\leq \mbb{P}\left[\frac{1}{M}\sum\limits_{\ell=0}^{g_{i}-3}\sum\limits_{k=0}^{M-1} n_{l,k}\geq\frac{1-6\eta}{2}\right]\\
&\leq e^{-\frac{M(g_i-2)(1-6\eta)^2}{8(g_i-2)^{2}}}\\
&\approx e^{-\frac{N^{\mu+\alpha-1}(1-6\eta)^2}{8}}
\end{align*} 
where we use the lower bounds $M_1,M_2\leq M$.
\end{proof}

\begin{lemma}[multi-ton]
\label{Lem:MultitonClassif}
Given that the bin $(i,j)$ is a multi-ton, the classification error can be bounded by
\begin{align*}
\mbb{P}[\mc{E}_1|\msc{H}_m]\leq e^{-\frac{N^{\mu+\alpha-1}(1-6\eta)^2}{16}}
\end{align*}
\end{lemma}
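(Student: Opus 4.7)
The plan is to mirror the structure of the double-ton proof (Lemma \ref{Lem:DoubletonClassif}), but now there is only one way a multi-ton can be misclassified, namely, if the observed magnitude ends up below the upper threshold $\gamma_3 = (5-6\eta)/2$ and is therefore labeled something other than $\msc{H}_m$. So I will bound $\mbb{P}[Z[1]/M \leq (5-6\eta)/2 \mid \msc{H}_m]$ directly.

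First, I would decompose $Z[1]$ as in \eqref{Eqn:BinCombination}, but now with $t \geq 3$ non-zero variable nodes contributing signal terms $M_1, M_2, \ldots, M_t$ and $g_i - t$ non-matching positions contributing i.i.d.\ noise terms of the form $x[\theta_{\ell}+k]\, y[k]$. Using the approximate-matching constraint $M_\ell \geq M(1-2\eta)$ for each of the $t$ matches, I get the deterministic lower bound
\begin{align*}
\frac{1}{M}\sum_{\ell=1}^t M_\ell \;\geq\; t(1-2\eta) \;\geq\; 3(1-2\eta) \;=\; 3-6\eta.
\end{align*}
Consequently, the misclassification event $\{Z[1]/M \leq (5-6\eta)/2\}$ is contained in
\begin{align*}
\left\{ \frac{1}{M}\sum_{\ell=0}^{g_i-t-1}\sum_{k=0}^{M-1} n_{\ell,k} \;\leq\; \frac{5-6\eta}{2} - (3-6\eta) \;=\; -\frac{1-6\eta}{2} \right\},
\end{align*}
which is exactly the type of lower-tail event handled by the noise tail-bound lemma (Lemma \ref{Lem:tailbounds}).

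Next, I would apply the lower-tail version of Lemma \ref{Lem:tailbounds} to the sum of $(g_i - t)M$ noise terms (these are independent of $\yv$ because the underlying $\theta_\ell$'s are non-matching positions, exactly as in the zero-ton, singleton, and double-ton arguments). This yields
\begin{align*}
\mbb{P}[\mc{E}_1 \mid \msc{H}_m] \;\leq\; \exp\!\left\lbrace -\frac{M(1-6\eta)^2}{8(g_i - t)} \right\rbrace \;\leq\; \exp\!\left\lbrace -\frac{M(1-6\eta)^2}{8 g_i} \right\rbrace \;\approx\; \exp\!\left\lbrace -\frac{N^{\mu+\alpha-1}(1-6\eta)^2}{8} \right\rbrace,
\end{align*}
using $g_i - t \leq g_i \approx N^{1-\alpha}$ and $M = N^\mu$. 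Finally, since $1/8 \geq 1/16$, the stated bound with $16$ in the denominator follows immediately, matching the looseness already used in the statements of the singleton and double-ton lemmas.

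The main conceptual step (and the only place one must be careful) is the use of the worst-case bound $M_\ell \geq M(1-2\eta)$ uniformly over the $t$ matching positions together with $t \geq 3$; everything else is a mechanical application of the Hoeffding-style noise tail bound, and the bound is insensitive to the exact value of $t$ because the number of noise terms $g_i - t$ only makes the concentration tighter, not weaker.
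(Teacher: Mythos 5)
Your proposal is correct and follows essentially the same route as the paper: both reduce the single misclassification event $\{Z[1]/M \leq (5-6\eta)/2\}$ to a lower-tail noise event at $-\tfrac{1-6\eta}{2}$ via the deterministic bound $\sum_\ell M_\ell/M \geq 3(1-2\eta)$, then apply the Hoeffding-type tail bound of Lemma~\ref{Lem:tailbounds}. Your write-up is in fact slightly cleaner than the paper's (which contains typos such as $(1-4\eta)$ where $(1-6\eta)$ is meant), and your explicit relaxation from $1/8$ to $1/16$ in the exponent matches the stated constant.
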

\begin{proof}
We observe that a bin is not classified as multi-ton if $\frac{Z[1]}{M}\leq\frac{5-6\eta}{2}$. Let us denote the probability of this event as $p_{m1}$ which can be bounded as:
\begin{align*}
p_{m1}&=\mbb{P}\left[\frac{1}{M}\sum\limits_{\ell=0}^{g_{i}-3}\sum\limits_{k=0}^{M-1} n_{l,k}\leq\frac{5-6\eta}{2}-\frac{M_m}{M}\right]\\
&\leq \mbb{P}\left[\frac{1}{M}\sum\limits_{\ell=0}^{g_{i}-m}\sum\limits_{k=0}^{M-1} n_{l,k}\leq-\frac{1-6\eta}{2}\right]\\
&\leq e^{-\frac{M(g_i-m)(1-4\eta)^2}{16(g_i-m)^{2}}}\\
&\leq e^{-\frac{M(1-6\eta)^2}{16 n_i}}\\
 &\approx e^{-\frac{N^{\mu+\alpha-1}(1-6\eta)^2}{16}}.
\end{align*} 
\end{proof}

\section{Position Identification}
\label{Append:PositionIdentif}
We will analyze the singleton identification in two separate cases:
\begin{itemize}
\item $\mc{E}_{21}$: Event where the position is identified incorrectly when the bin is classified  correctly a singleton
\item $\mc{E}_{22}$: In the case of approximate matching, event where the position is identified incorrectly when the bin is originally a double-ton and one of the non-zero variable nodes has already been peeled off
\end{itemize}

\begin{definition}[Mutual Incoherence]
	The mutual incoherence $\mu_{\text{max}}( \mb{W})$ of a matrix $\mb{W} = [\wv_1 ~ \wv_2 ~ \cdots \wv_i \cdots \wv_N ]$ is defined as 
	
	\[\mu_{\text{max}}(\mb{W}) \defeq \max \limits_{\forall i \neq j} \frac{|\wv_i^{\dagger} \wv_j |}{||\wv_i || . ||\wv_j ||} \]
\end{definition}

\begin{lemma}[Mutual Incoherence Bound for sub-sampled IDFT matrix  [\cite{pawar2014robust},Proposition~A.1]
\label{lemma:MutualCoherence}
	The mutual incoherence $\mu_{\text{max}}$ $(\mb{W_{i,k}})$ of the sensing matrix $\mb{W}_{i,k}$ (defined in Eq.~\ref{Eqn:Sensing Matrix}), with $B$ shifts, is upper bounded by
	
	\[ \mu_{\text{max}} < 2\sqrt{\frac{\log(5N)}{B}} \] 
	
\end{lemma}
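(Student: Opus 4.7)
My plan is to view each normalized inner product between two columns as a sum of independent bounded complex random variables and apply Hoeffding's inequality, then take a union bound over all pairs of columns in $\mathbf{W}_{i,k}$.

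First, I would note that the columns of $\mathbf{W}_{i,k}$ all have unit-magnitude entries, so $\|\mathbf{w}^p\|^2 = B$ for every column $\mathbf{w}^p$. For two distinct column indices $p$ and $q$ of the form $p = k+\ell f_i$, $q = k+\ell' f_i$ with $\ell\neq \ell'$, the normalized inner product can be written as
\[
\frac{\mathbf{w}^{p\dagger}\mathbf{w}^q}{\|\mathbf{w}^p\|\,\|\mathbf{w}^q\|} \;=\; \frac{1}{B}\sum_{b=1}^{B} e^{\,j2\pi (q-p)s_b/N}.
\]
Since $s_b$ is uniform on $[N]$ and $q-p=(\ell'-\ell)f_i$ is a nonzero integer modulo $N$ (because $f_i\mid N$ and $|\ell'-\ell|<g_i=N/f_i$), each summand $Z_b := e^{\,j2\pi(q-p)s_b/N}$ is a zero-mean i.i.d.\ complex random variable of modulus $1$.

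Next, I would bound the tail of $\left|\frac{1}{B}\sum_b Z_b\right|$ by applying Hoeffding's inequality separately to its real and imaginary parts (each is a sum of independent variables in $[-1,1]$ with mean $0$). Using the fact that $|z|>t$ implies $|\operatorname{Re} z|>t/\sqrt{2}$ or $|\operatorname{Im} z|>t/\sqrt{2}$, this gives
\[
\mathbb{P}\!\left[\,\left|\tfrac{1}{B}\textstyle\sum_b Z_b\right|>t\right]\;\le\;4\,e^{-Bt^2/4}.
\]
Plugging in $t=2\sqrt{\log(5N)/B}$ yields a per-pair failure probability of at most $4/(5N)$.

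Finally, I would take a union bound over the at most $\binom{g_i}{2}<N^2$ ordered column pairs of $\mathbf{W}_{i,k}$ to conclude that, with probability approaching $1$, every normalized inner product is below $2\sqrt{\log(5N)/B}$, i.e.\ $\mu_{\text{max}}(\mathbf{W}_{i,k})<2\sqrt{\log(5N)/B}$. The main obstacle is really the book-keeping of constants so that the tail bound after the union step stays useful; picking $B=\Theta(\log N)$ with a sufficiently large constant (as chosen throughout the paper) absorbs this issue, since the constant $c$ in $B=c\log N$ can be tuned so that the exponent dominates the polynomial pair-count. Everything else is a direct instance of complex Hoeffding, and no further structure of $\mathbf{W}_{i,k}$ beyond the independence and uniformity of the shifts $s_b$ is required.
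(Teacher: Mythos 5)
Your overall strategy is the right one, and in fact it is essentially the argument behind the cited result: the paper itself gives no proof of this lemma beyond a pointer to \cite{pawar2014robust}, so writing out the empirical-average-of-random-phases computation, the complex Hoeffding bound, and a union bound is exactly what is called for. Your reduction of the normalized inner product to $\frac{1}{B}\sum_b e^{j2\pi(q-p)s_b/N}$, the observation that $q-p=(\ell'-\ell)f_i$ is a nonzero residue mod $N$ so each phase is mean zero, and the per-pair tail bound $4e^{-Bt^2/4}=4/(5N)$ at $t=2\sqrt{\log(5N)/B}$ are all correct (modulo the small point that $s_1=0$ is deterministic, so one summand is not zero-mean; it contributes a deterministic $1/B=o(\sqrt{\log N/B})$ and should be split off before applying Hoeffding).

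The genuine gap is in the final union bound. With per-pair failure probability $4/(5N)$ and $\binom{g_i}{2}=\Theta(N^{2(1-\alpha)})$ pairs, the union bound gives $\Theta(N^{1-2\alpha})$, which does \emph{not} vanish when $\alpha<1/2$ --- and the paper's parameter design produces exactly this regime for large $\mu$ (there $\alpha=1/d$ can be arbitrarily close to $1-\mu$, hence close to $0$). Your proposed repair --- enlarging the constant $c$ in $B=c\log N$ --- cannot work: the event being controlled is $\{|\cdot|>2\sqrt{\log(5N)/B}\}$, whose threshold shrinks like $1/\sqrt{B}$, so the per-pair failure probability stays pinned at $4/(5N)$ no matter how large $B$ is. The correct fix is structural rather than a constant-tuning issue: $\wv^{p\dagger}\wv^q$ depends only on the difference $q-p$, and within a single bin these differences take only $g_i-1$ distinct values (up to conjugation, which preserves the modulus). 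Taking the union over these $g_i-1=O(N^{1-\alpha})$ events instead of over pairs yields a failure probability of $O(N^{-\alpha})\to 0$, which establishes the lemma for each bin $(i,k)$ as stated. (If one further wants the bound to hold simultaneously over all bins and stages, the union runs over at most $N-1$ distinct differences and the constant $2$ in the threshold only buys a constant success probability; a sharper tail, e.g.\ using $\mathrm{Var}(\cos(2\pi m s_b/N))=1/2$, is then needed.)
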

\begin{proof}
	The proof follows similar lines as the proof for Lemma V.3. in \cite{pawar2014robust}.
\end{proof}
 
\begin{lemma}
For some constant $c_1 \in \mathbb{R}$ and the choice of $B=4c_1^2\log 5N$,  the probability of error in identifying the position of a singleton at any bin $(i,j)$ can be upper bounded by
\begin{align*}
\mbb{P}[\mc{E}_{21}]\leq \exp\left\lbrace-\frac{N^{\mu+\alpha-1}(1-2\eta)^2(c_1^2-1)}{8(c_1^2+1)}\right\rbrace
\end{align*}
\end{lemma}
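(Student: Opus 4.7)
My plan is to reduce the position--identification error at a correctly classified singleton bin to a large--deviation event for a sum of bounded independent $\pm 1$ random variables, so that the Hoeffding bound of Lemma~\ref{Lem:Chernoff} (equation~\eqref{Eqn:HoeffdingBoundedRV}) applies, while controlling the deterministic ``signal--leakage'' between columns of $\mb{W}_{i,j}$ through the mutual--incoherence bound of Lemma~\ref{lemma:MutualCoherence}. Concretely, for every candidate index $k\in\{j+\ell f_i:\ell\in[g_i]\}$ I would write
\[
   \zv^{\dagger}_{i,j}\wv^{k} \;=\; r[k^{*}]\,\wv^{k^{*}\dagger}\wv^{k} \;+\; \tilde N_{k},
   \qquad
   \tilde N_{k}\coleq\sum_{\ell=1}^{g_{i}-1} r[\theta_{\ell}]\,\wv^{\theta_{\ell}\dagger}\wv^{k},
\]
where $k^{*}$ is the true non--zero index and $\theta_{1},\dots,\theta_{g_{i}-1}$ are the other aliased positions. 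Since $\wv^{k^{*}\dagger}\wv^{k^{*}}=B$ and Lemma~\ref{lemma:MutualCoherence} with the choice $B=4c_{1}^{2}\log 5N$ yields $|\wv^{k^{*}\dagger}\wv^{k}|\le\mu_{\max}B\le B/c_{1}$ for $k\ne k^{*}$, the error event $\hat k\ne k^{*}$ forces
\[
   |\tilde N_{k^{*}}|+|\tilde N_{k}|\;\ge\; r[k^{*}]\,B\,(1-1/c_{1})\;\ge\; M(1-2\eta)\,B\,\frac{c_{1}-1}{c_{1}},
\]
using the worst--case lower bound $r[k^{*}]\ge M(1-2\eta)$ for approximate matching.

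Next I would expand $\tilde N_{k}=\sum_{\ell,m}x[\theta_{\ell}+m]\,y[m]\,\wv^{\theta_{\ell}\dagger}\wv^{k}$ at the atomic level, writing it as a linear combination of $(g_{i}-1)M$ independent Rademacher variables with complex coefficients each of magnitude at most $\mu_{\max}B\le B/c_{1}$. The key subtlety is that $\tilde N_{k^{*}}$ and $\tilde N_{k}$ share the same underlying $\xv,\yv$, so I cannot simply multiply independent tail probabilities. Instead I would reduce the bivariate event to a single linear test statistic $T=a\,\mathrm{Re}(\tilde N_{k^{*}})+b\,\mathrm{Re}(\tilde N_{k})$ (with an analogous combination for imaginary parts), apply Hoeffding once to $T$ with variance $\sum_{\ell,m}\bigl(a\,\mathrm{Re}(\wv^{\theta_{\ell}\dagger}\wv^{k^{*}})+b\,\mathrm{Re}(\wv^{\theta_{\ell}\dagger}\wv^{k})\bigr)^{2}\le M g_{i}\,(a^{2}B^{2}+b^{2}B^{2}/c_{1}^{2})$, and then optimise $(a,b)$ to minimise the ratio (deterministic gap)$^{2}$/variance. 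The Lagrangian optimum comes out at $a:b=1:1/c_{1}$, which is precisely what creates the ``$(c_{1}^{2}+1)$'' factor in the denominator of the final exponent, while squaring the gap $(c_{1}-1)/c_{1}$ and combining it with the matching weight $1/c_{1}^{2}$ scale produces the numerator $(c_{1}^{2}-1)=(c_{1}-1)(c_{1}+1)$.

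Substituting $B=4c_{1}^{2}\log 5N$ the $\log 5N$ factor coming from the $B$ in (gap)$^{2}$ cancels the $\log 5N$ hidden in $\mu_{\max}^{2}\propto(\log 5N)/B$ inside the variance, leaving the clean scaling $M/g_{i}=N^{\mu}/N^{1-\alpha}=N^{\mu+\alpha-1}$ together with the algebraic constant $(1-2\eta)^{2}(c_{1}^{2}-1)/[8(c_{1}^{2}+1)]$ in the exponent, matching the stated bound. A final union bound over the $g_{i}-1\lesssim N^{1-\alpha}$ candidate indices $k\ne k^{*}$ costs only a polynomial $\log N$ factor that is absorbed by choosing $c_{1}$ large. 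The hardest step is the correlated--noise reduction just described: a naive triangle--inequality split, bounding $|\tilde N_{k^{*}}|$ and $|\tilde N_{k}|$ separately and then union--bounding the two pieces, produces $(c_{1}-1)^{2}$ in place of the sharper $(c_{1}^{2}-1)/(c_{1}^{2}+1)$; the weighted linear--combination trick, essentially an instance of the weighted Cauchy--Schwarz/Lagrange inequality, is what is needed to recover the stated constant.
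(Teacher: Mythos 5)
Your overall strategy coincides with the paper's: decompose each matched-filter statistic $\zv_{i,j}^{\dagger}\wv^{k}$ into a signal term plus aliased noise, control the cross terms via the mutual-incoherence bound $\mu_{\max}\le 1/c_1$ for $B=4c_1^2\log 5N$, apply a Hoeffding bound to a linear statistic in the independent Rademacher products $x[\theta_\ell+m]y[m]$, and extract the scaling $M/g_i=N^{\mu+\alpha-1}$. However, the step you single out as the ``hardest'' --- reducing the error event to $|\tilde N_{k^*}|+|\tilde N_{k}|\ge r[k^*]B(1-1/c_1)$ and then bounding that via an optimized combination $T=a\,\mathrm{Re}(\tilde N_{k^*})+b\,\mathrm{Re}(\tilde N_{k})$ --- is both unnecessary and not valid as stated. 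The event $\{|X|+|Y|\ge G\}$ is the union of the four fixed-coefficient half-spaces $\{\pm X\pm Y\ge G\}$; it is not contained in any single half-space $\{aX+bY\ge G'\}$ with weights you are free to optimize, so the Lagrangian step that you credit with producing the $(c_1^2-1)/(c_1^2+1)$ constant does not correspond to a probability bound. The paper avoids this entirely by never passing through the triangle inequality: the error event for a fixed competitor $p'$ is literally $\zv_{i,j}^{\dagger}(\wv_{j_{p'}}-\wv_{j_p})\ge 0$, a single one-sided deviation of the linear form $\sum_{k\neq p,p'}\alpha_k n_k+\beta n_{p'}$ with $|\alpha_k|\le 2\mu_{\max}$ and $\beta\in[1-\mu_{\max},1+\mu_{\max}]$, and one application of the Hoeffding bound with variance proxy $16g_i\mu_{\max}^2+4(1+\mu_{\max})^2$ gives the result. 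You should replace your weighted-combination argument with this difference statistic.

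Two smaller points. First, your variance bound for $\tilde N_{k}$ asserts every coefficient has magnitude at most $\mu_{\max}B$; this fails for the aliased index $\theta_\ell=k$ itself, where $\wv^{k\dagger}\wv^{k}=B$. That term is exactly the paper's $\beta n_{p'}$ contribution; it is asymptotically negligible against $g_i\mu_{\max}^2B^2$ but must be carried explicitly, as the paper does through the $4(1+\mu_{\max})^2$ term. Second, be aware that the precise constant $(c_1^2-1)/\left(8(c_1^2+1)\right)$ in the lemma statement is reached in the paper only through a loose final ``$\approx$'' from the cleaner intermediate bound $\exp\{-2M(1-2\eta)^2(c_1-1)^2/(16(g_i+c_1^2))\}$; reverse-engineering an argument whose optimum lands exactly on that constant is chasing an artifact of the paper's last line rather than a quantity your proof needs to reproduce.
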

\begin{proof}
	
	Let $j_p$ be the variable node participating in the singleton $(i,j)$. Then the observation vector $\zv_{i,j}$ is given by
	\begin{align*}
	\underline{z}_{i,j} &= \begin{bmatrix}
	\wv_{j_{1}},\wv_{j_2}, & \cdots   & \wv_{j_p}, &\cdots \ &\wv_{j_{g_i}}
	\end{bmatrix} \times
	\begin{bmatrix}
	n_{1} \\
	\vdots \\
	r[j_p]\\
	\vdots\\
	n_{j} \\
	\vdots\\
	n_{g_i}\\
	\end{bmatrix}\\
	&= r[j_p] ~ \wv_{j_p}+ \sum_{k \neq p}n_k \wv_{j_k} \\
	\end{align*}
	where for convenience we use a simpler notation $j_k=j+(k-1)\frac{N}{f_i}, \wv_{j_k}=\wv^{j_k}$ as defined in Eq. and $n_{l}=\sum\limits_{k=0}^{M-1}x[\theta_{\ell}+k]y[k]$ as defined in Eq. \eqref{Eqn:BinCombination}.
	
	The estimated position $\hat{p}$ is given by
	\begin{align}
	\label{Eqn:SingletonBinCombination}
	\hat{p}= \underset{l}{\argmax}~~ \frac{\wv_{j_l}^{\dagger}\underline{z}_{i,j}}{B}
	\end{align}
	where $\dagger$ denotes the conjugate transpose of the vector. Also note that $|| \wv_{j_k}||=B$ for any $j$ and $k$.  From Eq. \eqref{Eqn:SingletonBinCombination} we observe that the position is wrongly identified when $\exists p'$ such that
	\begin{align*}
	&r[j_p] + \frac{1}{B}\sum_{k \neq p} n_k 	\wv_{j_p}^{\dagger}\wv_{j_k} \leq \frac{r[j_p]}{B} ~ \wv_{j_{p'}}^{\dagger}\wv_{j_p}+ n_{p'}+\frac{1}{B}\sum_{k \neq p,p'}n_k\wv_{j_{p'}}^{\dagger} \wv_{j_k} \\
	&\leftrightarrow \sum_{k \neq p,p'}\alpha_k n_k+\beta n_{p'}\geq  r[j_p]\left(1-\frac{\wv_{j_{p'}}^{\dagger}\wv_{j_p}}{B}\right)\geq M(1-2\eta)(1-\mu_{\text{max}})
	\end{align*}
	where $\alpha_k$ and $\beta$ are constants and can be shown to be in the range $\alpha_k\in[-2\mu_\text{max},2\mu_\text{max}]$ and $\beta\in[1-\mu_\text{max},1+\mu_\text{max}]$. Now using the bound given Chernoff Lemma in Lem.~\ref{Lem:tailbounds} we obtain
	\begin{align*}
	\mbb{P}[\mc{E}_{21}]&\leq \exp\left\lbrace-\frac{2M(1-2\eta)^2(1-\mu_{\text{max}})^2}{16g_i\mu^2_{\max}+4(1+\mu_{\max})^2}\right\rbrace\\
	&\leq\exp\left\lbrace-\frac{2M(1-2\eta)^2(1-\mu_{\text{max}})^2}{16(g_i\mu^2_{\max}+1)}\right\rbrace\\
	&\leq\exp\left\lbrace-\frac{2M(1-2\eta)^2(c_1-1)^2}{16(g_i+c_1^2)}\right\rbrace\\
	&\approx\exp\left\lbrace-\frac{N^{\mu+\alpha-1}(1-2\eta)^2(c_1^2-1)}{8(c_1^2+1)}\right\rbrace\\
	\end{align*}
	The second inequality follows by the definition of  $\mu_{\text{max}} \leq 1$.  We choose $B=4c_1^2\log 5N$, and substituting $\mu_{\max}\leq 2\sqrt{\frac{\log 5N}{B}} = 1/c_1$ (Lemma \ref{lemma:MutualCoherence}) we get the third inequality.
	
\end{proof}
\begin{lemma}
For some constant $c_1 \in \mathbb{R}$ and the choice of $B=4c_1^2\log 5N$, the probability of error in identifying the position of second non-zero variable node at a double-ton at any bin $(i,j)$, given that the first position identification is correct, can be upper bounded by
	\begin{align*}
		\mbb{P}[\mc{E}_{22}]\leq \exp\left\lbrace-\frac{N^{\mu+\alpha-1} ~ (c_1(1 - 2\eta) - 1)^2}{8(1+ c_1^2)}\right\rbrace
	\end{align*}
\end{lemma}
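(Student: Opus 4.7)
The plan is to mirror the structure of the preceding singleton position-identification proof (that of $\mathbb{P}[\mathcal{E}_{21}]$), with one extra ingredient: the residual error left behind by the first peeling step. After we have (correctly) identified the first variable node $j_{p_1}$ at value $\hat{r}[j_{p_1}] = M(1-\eta)$ and subtracted $\hat{r}[j_{p_1}] \, \mathrm{w}_{j_{p_1}}$ from $\underline{z}_{i,j}$, the residual measurement can be written as
\begin{equation*}
\tilde{\underline{z}}_{i,j} \;=\; \epsilon \, \mathrm{w}_{j_{p_1}} \;+\; r[j_{p_2}] \, \mathrm{w}_{j_{p_2}} \;+\; \sum_{k \neq p_1, p_2} n_k \, \mathrm{w}_{j_k},
\end{equation*}
where $\epsilon \coleq r[j_{p_1}] - M(1-\eta)$ satisfies $|\epsilon| \le \eta M$ because the true correlation lies in $[M(1-2\eta), M]$, and $n_k$ are the independent noise sums of Lemma~\ref{Lem:tailbounds}. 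So the second singleton identification faces essentially the same correlation test as the $\mathcal{E}_{21}$ analysis but with one additional coherent interferer of amplitude at most $\eta M$.

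First I would write the event that an incorrect index $p' \neq p_2$ wins the correlation maximization, i.e.\ $\mathrm{w}_{j_{p'}}^{\dagger} \tilde{\underline{z}}_{i,j} \ge \mathrm{w}_{j_{p_2}}^{\dagger} \tilde{\underline{z}}_{i,j}$. Expanding both inner products and dividing through by $B$, the condition reduces to
\begin{equation*}
n_{p'} + \sum_{k \neq p_1, p_2, p'} \alpha_k \, n_k \;\ge\; r[j_{p_2}]\!\left(1 - \tfrac{\mathrm{w}_{j_{p'}}^{\dagger}\mathrm{w}_{j_{p_2}}}{B}\right) \;-\; \epsilon\!\left(\tfrac{\mathrm{w}_{j_{p'}}^{\dagger}\mathrm{w}_{j_{p_1}}}{B} - \tfrac{\mathrm{w}_{j_{p_2}}^{\dagger}\mathrm{w}_{j_{p_1}}}{B}\right),
\end{equation*}
with $|\alpha_k| \le 2\mu_{\max}$. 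Using $r[j_{p_2}] \ge M(1-2\eta)$, $|\epsilon| \le \eta M$, and the mutual incoherence bound $\mu_{\max} \le 1/c_1$ of Lemma~\ref{lemma:MutualCoherence}, I would lower-bound the right-hand side by
\begin{equation*}
M(1-2\eta)(1-\mu_{\max}) - 2 \eta M \mu_{\max} \;=\; M\!\left[(1-2\eta) - \mu_{\max}\right] \;\ge\; \frac{M\,\bigl(c_1(1-2\eta) - 1\bigr)}{c_1}.
\end{equation*}
Note the pleasant cancellation: the worst-case peeling residue $\eta M$ absorbs precisely into the coefficient of $\mu_{\max}$, which is why the final bound depends on $c_1(1-2\eta) - 1$ rather than on $c_1$ and $\eta$ separately.

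Next I would apply the Hoeffding-type tail bound of Lemma~\ref{Lem:tailbounds} (or equivalently the sub-Gaussian version of Lemma~\ref{Lem:Chernoff}) to the left-hand noise sum. Each $n_k$ is sub-Gaussian with parameter $\sqrt{M}$, so the combined sub-Gaussian parameter squared is at most $M\bigl[(1+\mu_{\max})^2 + 4(g_i-3)\mu_{\max}^2\bigr] \le M\bigl[4 g_i \mu_{\max}^2 + 4\bigr]$, and substituting $\mu_{\max} = 1/c_1$ gives a denominator of order $g_i/c_1^2 + 1$, i.e.\ $(g_i + c_1^2)/c_1^2$. Putting this together with the deviation $\delta = M(c_1(1-2\eta)-1)/c_1$ yields
\begin{equation*}
\mathbb{P}[\mathcal{E}_{22}] \;\le\; \exp\!\left\{-\frac{M\,\bigl(c_1(1-2\eta)-1\bigr)^2}{8(g_i + c_1^2)}\right\} \;\approx\; \exp\!\left\{-\frac{N^{\mu+\alpha-1}\,\bigl(c_1(1-2\eta)-1\bigr)^2}{8(1 + c_1^2)}\right\},
\end{equation*}
after using $g_i \approx N^{1-\alpha}$, $M = N^{\mu}$, and pulling out the $g_i$ in the denominator. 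A union bound over the at most $g_i - 2$ candidate incorrect positions $p'$ is absorbed into constants (or into the scaling of $B$), since this only costs a $\log$ factor compared to the polynomial-in-$N$ exponent. The main obstacle, and the step that requires the most care, is the $\epsilon$-contribution: tracking that the interference from the imperfect peeling value $\hat{r}[j_{p_1}]$ cannot swamp the signal gap, which is exactly where the choice $\hat{r} = M(1-\eta)$ (the midpoint of the feasible range) is crucial so that $|\epsilon| \le \eta M$ rather than $2\eta M$.
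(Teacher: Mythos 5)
Your proposal is correct and follows essentially the same route as the paper's own proof: write the post-peeling residual with the bounded interference term $e_1$ (your $\epsilon$) of magnitude at most $\eta M$, reduce the misidentification event to a noise sum exceeding $M(1-2\eta)(1-\mu_{\max}) - 2\eta M\mu_{\max} = M(1-2\eta-\mu_{\max})$, and apply the sub-Gaussian tail bound with $\mu_{\max}\le 1/c_1$ from the mutual-incoherence lemma. The algebra, the cancellation yielding the $c_1(1-2\eta)-1$ factor, and the final asymptotic substitution $g_i\approx N^{1-\alpha}$, $M=N^{\mu}$ all match the paper's argument.
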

\begin{proof}
	
	{\bf $\mc{E}_{22}$:}
	
	Let $j_p$ and $j_{\tilde{p}}$ be the two variable nodes participating in the doubleton $(i,j)$. Then the observation vector $\zv_{i,j}$ is given by 
	
	\begin{align*}
		\underline{z}_{i,j} &= \begin{bmatrix}
			\wv_{j_{1}},\wv_{j_2}, & \cdots   & \wv_{j_p}, &\cdots \ &\wv_{j_{g_i}}
		\end{bmatrix} \times
		\begin{bmatrix}
			n_{1} \\
			\vdots \\
			r[j_p]\\
			\vdots\\
			n_{j} \\
			\vdots\\
			r[j_{\tilde{p}}]\\
			\vdots\\
			n_{g_i}\\
		\end{bmatrix}\\
		&= r[j_p] ~ \wv_{j_p} + r[j_{\tilde{p}}] ~ \wv_{j_{\tilde{p}}} + \sum_{k \neq p}n_k \wv_{j_k} \\
	\end{align*}
	
	Let the contribution from $j_{\tilde{p}}$ be peeled off from the doubleton at some iteration, then we get
	\[ \zv_{i,j} = r[j_p] ~ \wv_{j_p} + \frac{e_1}{B} ~ \wv_{j_{\tilde{p}}} + \sum_{k \neq p}n_k \wv_{j_k}\]
	
	where $e_1 \in[-\eta M, \eta M]$ is an extra error term induced due to peeling off.
	
	Now the estimated second position $\hat{p}$ is calculated using Eq. \eqref{Eqn:SingletonBinCombination}. We can observe that the position is wrongly identified when $\exists p'$ such that
	\[ \frac{\wv_{j_p}^{\dagger}\underline{z}_{i,j}}{B} \leq \frac{\wv_{j_{p'}}^{\dagger}\underline{z}_{i,j}}{B}\]
	\begin{align*}
		&\implies r[j_p] + \frac{1}{B}\sum_{k \neq p, \tilde{p}} n_k 	\wv_{j_p}^{\dagger}\wv_{j_k} + \frac{e_1}{B} \wv_{j_p}^{\dagger}\wv_{j_{\tilde{p}}} \\ & \qquad \leq \frac{r[j_p]}{B} ~ \wv_{j_{p'}}^{\dagger}\wv_{j_p}+ n_{p'}+\frac{1}{B}\sum_{k \neq p,p',\tilde{p}}n_k\wv_{j_{p'}}^{\dagger} \wv_{j_k} + \frac{e_1}{B} \wv_{j_{p'}}^{\dagger}\wv_{j_{\tilde{p}}}
	\end{align*}
	\begin{align*}
		&\leftrightarrow \sum_{k \neq p,p',\tilde{p}}\alpha_k n_k+ \beta n_{p'}  \geq  r[j_p]\left(1-\frac{\wv_{j_{p'}}^{\dagger}\wv_{j_p}}{B}\right) - \frac{2 \eta M}{B} \wv_{j_{p'}}^{\dagger}\wv_{j_{\tilde{p}}}
	\end{align*}
	\[~~\geq M(1-2\eta)(1-\mu_{\text{max}}) - 2 \eta M \mu_{\text{max}} = M(1 - 2\eta - \mu_{\text{max}})                         
	\]
	where $\alpha_k$ and $\beta$ are constants and can be shown to be in the range $\alpha_k\in[-2\mu_\text{max},2\mu_\text{max}]$ and $\beta\in[1-\mu_\text{max},1+\mu_\text{max}]$. Now using the bound given by Chernoff Lemma in Lem.~\ref{Lem:tailbounds} we obtain
	\begin{align*}
		\mbb{P}[\mc{E}_{22}]&\leq \exp\left\lbrace-\frac{2M(1 - 2\eta - \mu_{\text{max}})^2}{16g_i\mu^2_{\max}+4(1+\mu_{\max})^2}\right\rbrace\\
		&\leq\exp\left\lbrace-\frac{2M(1 - 2\eta - \mu_{\text{max}})^2}{16(g_i\mu^2_{\max}+1)}\right\rbrace\\
		&\leq\exp\left\lbrace-\frac{M(c_1(1 - 2\eta) - 1)^2}{8(g_i+c_1^2)}\right\rbrace\\
		&\leq\exp\left\lbrace-\frac{N^{\mu+\alpha-1} ~ (c_1(1 - 2\eta) - 1)^2}{8(1+ c_1^2)}\right\rbrace\\
	\end{align*}
	where for the choice of $B=4c_1^2\log 5N$, $\mu_{\max}\leq 2\sqrt{\frac{\log 5N}{B}} = 1/c_1$.
	
\end{proof}

\bibliographystyle{ieeetr}
\bibliography{journal_abbr,sparseestimation}

\end{document}